\definecolor{DarkBlue}{rgb}{0.1,0.1,0.5}
\definecolor{DarkGreen}{rgb}{0.1,0.5,0.1}
\newcommand{\extra}[1]{}
\newtheorem{theorem}{Theorem}
\newtheorem{definition}{Definition}
\newtheorem{lemma}[theorem]{Lemma}
\newtheorem{claim}{Claim}
\def\squareforqed{\hbox{\rlap{$\sqcap$}$\sqcup$}}
\def\qed{\ifmmode\squareforqed\else{\unskip\nobreak\hfil
	\penalty50\hskip1em\null\nobreak\hfil\squareforqed
	\parfillskip=0pt\finalhyphendemerits=0\endgraf}\fi}
\def\endenv{\ifmmode\;\else{\unskip\nobreak\hfil
	\penalty50\hskip1em\null\nobreak\hfil\;
	\parfillskip=0pt\finalhyphendemerits=0\endgraf}\fi}
\renewenvironment{proof}{\noindent \textbf{{Proof~} }}{\qed\medskip}
\newenvironment{proof+}[1]{\noindent \textbf{{Proof #1~} }}{\qed\medskip}
\mathchardef\ordinarycolon\mathcode`\:
\def\vcentcolon{\mathrel{\mathop\ordinarycolon}}
\newcommand{\A}{\mathcal{A}}
\newcommand{\instance}[1]{\langle {#1} \rangle}
\newcommand{\val}{\{v_i\}_{i \in [n]}}
\newcommand{\M}{{\rm M}}
\newcommand{\EFone}{\textsc{EF1}}
\newcommand{\EF}{\textsc{EF}}
\title{\bfseries Achieving Envy-Freeness with Limited Subsidies under Dichotomous Valuations}
\author{Siddharth Barman\thanks{Indian Institute of Science. {\tt barman@iisc.ac.in}} \qquad Anand Krishna\thanks{Indian Institute of Science. {\tt anandkrishna@iisc.ac.in}} \qquad Y.~Narahari\thanks{Indian Institute of Science. {\tt narahari@iisc.ac.in}} \qquad Soumyarup Sadhukhan\thanks{Indian Institute of Technology Kanpur. {\tt soumyarup.sadhukhan@gmail.com}}}
\date{}
\begin{document}
\maketitle

\begin{abstract}
We study the problem of allocating indivisible goods among agents in a fair manner. While envy-free allocations of indivisible goods are not guaranteed to exist, envy-freeness can be achieved by additionally providing some subsidy to the agents. These subsidies can be alternatively viewed as a divisible good (money) that is fractionally assigned among the agents to realize an envy-free outcome. In this setup, we bound the subsidy required to attain envy-freeness among agents with dichotomous valuations, i.e., among agents whose marginal value for any good is either zero or one.  

We prove that, under dichotomous valuations, there exists an allocation that achieves envy-freeness with a per-agent subsidy of either $0$ or $1$. Furthermore, such an envy-free solution can be computed efficiently in the standard value-oracle model. Notably, our results hold for general dichotomous valuations and, in particular, do not require the (dichotomous) valuations to be additive, submodular, or even subadditive. Also, our subsidy bounds are tight and provide a linear (in the number of agents) factor improvement over the bounds known for general monotone valuations. 
\end{abstract}

\section{Introduction}
Discrete fair division is an extremely active field of work at the interface of mathematical economics and computer science~\cite{handbook2016,endriss2017trends}. This field addresses the problem of finding fair allocations of indivisible goods (i.e., resources that cannot be fractionally assigned) among agents with individual preferences. Solution concepts and algorithms developed here address several real-world settings, such as assignment of housing units \cite{deng2013story,benabbou2020finding}, course allocation \cite{Budish2017CourseMA}, and inheritance division; the widely-used platform spliddit.org~\cite{Spliddit} provides fair-division methods for a range of other allocation problems.  

The quintessential and classic \cite{foley1966resource,varian1974equity} notion of fairness in mathematical economics is that of envy-freeness, which requires that each agent prefers the bundle assigned to her over that of any other agent. However, in the case of indivisible goods, an envy-free allocation is not guaranteed to exist; consider the (ad nauseam) example of two agents and one good. 

Motivated, in part, by such considerations, a notable thread of research in discrete fair division aims to achieve the ideal of envy-freeness with the use of subsidies~\cite{halpern2019fair,brustle2019dollar,goko2021fair,caragiannis2020computing}. The objective  is to provide each agent $i$---in addition to a bundle---an appropriate subsidy $p_i$ such that envy-freeness is achieved overall. That is, the aim is to identify an allocation of the goods and subsidies such that each agent $i$'s value for her bundle plus her subsidy, $p_i$, is at least as large as $i$'s value for any other agent $j$'s bundle plus the subsidy $p_j$.

The subsidy model can be alternatively viewed as a setting wherein we have a divisible good (money), along with the indivisible ones. The divisible good can be fractionally assigned among the agents towards achieving fair (envy-free) outcomes. A natural question in this context is to understand  how much subsidy is required to eliminate all envy. This question was in fact addressed in the classic work of Maskin \cite{Maskin1987O} for the case of unit-demand agents. Considering a scaling, wherein each agent has value at most $1$ for any good, Maskin showed that a total subsidy of $(n-1)$ suffices;\footnote{Here, a valuation scaling is unavoidable, since the obtained guarantee is an absolute bound (on the total subsidy).} throughout, $n$ denotes the number of agents participating in the fair division exercise. This unit-demand setting is referred to as the rent division problem and has been studied over the past few decades; see, e.g.,  \cite{aziz2020developments} and references therein. Going beyond rent division, the above-mentioned results address settings in which agents' preferences span the subsets of the goods. In particular, the work of Halpern and Shah~\cite{halpern2019fair} shows that, if the $n$ agents have binary additive valuations, then envy-freeness can be achieved with a total subsidy of $(n-1)$. A similar result was obtained for binary submodular valuations by Goko et al.~\cite{goko2021fair}. 

The current work contributes to this line of work with a focus on valuations that have binary marginals. Specifically, we address fair division instances in which, for each agent $i$, the marginal value of any good $g$ relative to any subset $S$ is either zero or one, $v_i(S \cup \{g\}) - v_i(S) \in \{0,1\}$; here, set function $v_i$ denotes the valuation of agent $i$. Such set functions are referred to as dichotomous valuations and have received significant attention in various fair division settings, e.g.,~\cite{Bogomolnaia05,bouveret2008efficiency,Freitas2013CombinatorialAU,Kurokawa2018LeximinAI,BEF2020}.  Indeed, dichotomous valuations---and restricted subclasses thereof---model preferences in several application domains; see, e.g., \cite{Roth05pairwisekidney,deng2013story,benabbou2020finding}. 

The following stylized example illustrates the applicability of dichotomous valuations and the use of subsidies: consider a scheduling setting in which time-slots (indivisible goods) have be to assigned among employees (agents) whose nonzero marginal values for the slots are the same; equivalently, the agents have binary marginals for the slots. An employee's preference on whether a time-slot can be utilized (i.e., the slot has a nonzero marginal value) could depend on multiple factors, such as contiguity of slots, shift rotations, and time of day. Still, given that the current work holds for general dichotomous valuations,\footnote{In particular, we do not require the dichotomous valuations to be additive, submodular, or even subadditive.} finding a fair allocation in such a scheduling setting falls under the purview of the current work. Also, in this context, subsidies can be viewed as bonuses paid to the employees. \\
 
\noindent
\textbf{Our Results.} We prove that, under dichotomous valuations, there exists an allocation that achieves envy-freeness with a per-agent subsidy of either $0$ or $1$  (Theorem \ref{thm:main}). Furthermore, such an envy-free solution can be computed efficiently in the standard value-oracle model. Our per-agent subsidy bound implies that, for dichotomous valuations, a total subsidy of at most $(n-1)$ suffices to realize envy-freeness. Indeed, this guarantee is tight: consider an instance with $n$ agents and a single unit-valued good. In this instance, to achieve envy-freeness under any allocation, $(n-1)$ agents would each require a subsidy of $1$. \\

\noindent
\textbf{Additional Related Work.} As mentioned previously, Halpern and Shah~\cite{halpern2019fair} showed that if the agents' valuations are binary additive, then envy-freeness can be achieved with a cumulative subsidy of $(n-1)$. The work of Goko et al.~\cite{goko2021fair} provides an analogous result for binary submodular valuations and further develops a strategy-proof mechanism for this setting. Note that binary additive and binary submodular functions constitute subclasses of dichotomous valuations. 

Building upon \cite{halpern2019fair}, Brustle et al.~\cite{brustle2019dollar} proved that a subsidy of at most $1$ per agent suffices for additive valuations; this result is established under a standard scaling, wherein the value of any good across all agents is at most $1$. Brustle et al.~\cite{brustle2019dollar} also obtained an $O(n^2)$ upper bound on the total required subsidy for general monotone valuations. The current work achieves a factor $n$ improvement over this upper bound for dichotomous valuations in particular.

Caragiannis and Ioannidis~\cite{caragiannis2020computing} study the problem of finding allocations that realize envy-freeness with as little subsidy as possible. Since this  problem is {\rm NP}-hard, they focus on designing approximation algorithms and, in particular, develop a fully polynomial-time approximation scheme for instances with a constant number of agents. Narayan et al.~\cite{Narayan2021TwoBW} show that one can achieve envy-freeness with transfers, while incurring a constant-factor loss in Nash social welfare. 

It is also relevant to note that the framework of subsidies complements discrete fair division results which strive towards existential guarantees by relaxing the envy-freeness criterion. Here, a well-studied solution concept (relaxation) is envy-freeness up to one good ($\EFone$) \cite{Budish2011TheCA}: an allocation is said to be $\EFone$ iff each agent values her bundle at least as much as any other agent's bundle, up to the removal of some good from the other agent's bundle. $\EFone$ allocations are known to exist under monotone valuations \cite{LMMS042}.

\section{Notation and Preliminaries}
We study the problem of allocating $m$ indivisible goods among $n$ agents, with subsidies, in a fair manner. The cardinal preference of each agent $i\in [n]$, over the subsets of goods, is specified via valuation $v_i : 2^{[m]}\to \mathbb{R}_+$. Here, $v_i(S)\in \mathbb{R}_+$ denotes the valuation that agent $i$ has for a subset of goods $S\subseteq [m]$. We will throughout represent a discrete fair division instance via the tuple $ \langle [n],[m], \{v_i\}_{i=1}^n \rangle$. {Our algorithms work in the standard value-oracle model. That is, for each valuation $v_i$, we only require an oracle that---when queried with a subset $S \subseteq [m]$---provides the value $v_i(S)$.}  

This work focuses on valuations that have binary marginals, i.e., for each agent $i \in [n]$, the marginal value of including any good $g \in [m]$ in any subset $S \subseteq [m]$ is either zero or one: $v_i(S\cup \{g\})-v_i(S) \in \{0,1\}$. We will refer to such set functions $v_i$s as dichotomous valuations. Note that a dichotomous valuation $v_i$ is monotone: $v_i(A)\leq v_i(B)$ for all subsets $A\subseteq B\subseteq [m]$. Also, we will assume that the agents' valuations satisfy $v_i(\emptyset) = 0$.

 An allocation $\mathcal{A} = (A_1,A_2,\ldots,A_n)$ is an ordered collection of $n$ pairwise disjoint subsets, $A_1,\ldots, A_n \subseteq [m]$, wherein the subset of goods $A_i$ is assigned to agent $i \in [n]$. We refer to each such subset $A_i$ as a bundle. Note that an allocation can be partial in the sense that $\cup_{i=1}^n A_i \neq [m]$. For disambiguation, we will use the term complete allocation to denote allocations wherein all the goods have been assigned and, otherwise, use the term partial allocation. The social (utilitarian) welfare of an allocation $\A = (A_1, \ldots, A_n)$ is the sum of the values that it generates among the agents, $\sum_{i=1}^n v_i(A_i)$. 
 
The quintessential notion of fairness is that of envy-freeness \cite{foley1966resource}. This notion is defined next for allocations of indivisible goods. 
\begin{definition}
An allocation $\mathcal{A}=(A_1, \ldots, A_n)$ is said to be envy-free ($\EF$) iff $v_i(A_i)\geq v_i(A_j)$ for all agents $i,j\in [n]$.
\end{definition}

Since, in the context of indivisible goods, envy-free allocations are not guaranteed to exist, a realizable desideratum is to achieve envy-freeness with the use of subsidies. The objective here is to provide each agent $i \in [n]$---in addition to a bundle $A_i$---an appropriate subsidy $p_i \geq 0$ such that envy-freeness is achieved overall. We will write vector $p=(p_1,p_2,\ldots, p_n) \in \mathbb{R}_+^n$ to denote the subsidies assigned to the agents. Our overarching aim is to find a complete allocation $\mathcal{A}=(A_1, \ldots, A_n)$ and a (bounded) subsidy vector $p=(p_1,\ldots, p_n)$ that ensure envy-freeness for each agent $i$, i.e., agent $i$'s value for her bundle ($A_i$) plus her subsidy, $p_i$, is at least as large as $i$'s value for any other agent $j$'s bundle ($A_j$) plus the subsidy $p_j$. Formally, 
\begin{definition}[Envy-Free Solution]
An allocation $\mathcal{A}=(A_1, \ldots, A_n)$ and a subsidy vector $p=(p_1, \ldots, p_n)$ are said to constitute an envy-free solution, $(\mathcal{A},p)$, iff $v_i(A_i)+p_i \geq v_i(A_j)+p_j$ for all agents $i,j \in [n]$. 
\end{definition}
The term \emph{envy-freeable} will be used to denote any allocation $\mathcal{A}$ for which there exists a subsidy vector $p$ such that $(\mathcal{A},p)$ is an envy-free solution. 
Also, we will throughout write $\M(p)$ to denote the set of agents that receive maximum subsidy under $p = (p_1, \ldots, p_n)$, i.e., $\M(p) \coloneqq \left\{ i\in [n] \mid  p_i \geq p_j  \text{ for all } j\in [n] \right\}$.

\subsection{Characterizing Envy-Freeable Allocations}
The work of Halpern and Shah \cite{halpern2019fair} provides a useful characterization of envy-freeable allocations.\footnote{An analogous result was  developed by Aragones~\cite{aragones1995derivation} in the context of fair rent division.} Using the characterization, one can determine (in polynomial time) whether or not a given allocation $\mathcal{B}=(B_1, \ldots, B_n)$ is envy-freeable, i.e., whether $\mathcal{B}$ can be coupled with a subsidy vector $q=(q_1, \ldots, q_n)$ to realize an envy-free solution $(\mathcal{B}, q)$. Furthermore, for any envy-freeable allocation $\mathcal{B}$, the characterization leads to a polynomial-time algorithm to compute a corresponding subsidy vector $q$. 

The characterization is obtained by considering a directed graph that captures the envy among the agents. Specifically, for an allocation $\A=(A_1,\ldots, A_n)$ , the \emph{envy-graph} $G_\A$ is a complete, weighted, directed graph wherein the vertices correspond to the agents and each directed edge $(i,j)$ has weight $w_\A(i,j) \coloneqq v_i(A_j)-v_i(A_i)$. Note that the weight of the edge $(i,j)$ is the envy that agent $i$ has towards $j$, and the edge weights can be negative. The weight of any (directed) path $P$ in the graph $G_\A$ will be denoted by $w_\A(P)$; in particular, $w_\A(P)$ is the sum of the weights of the edges along the path, $w_{\A}(P) = \sum_{(i,j)\in P}\ w_\A(i,j)$. 

To achieve envy-freeness in certain settings, we will have to reassign the bundles among the agents. In particular, for an allocation $\mathcal{A} = (A_1, \ldots, A_n)$ and permutation $\sigma\in \mathbb{S}_n$ (over the $n$ agents), write $\mathcal{A}_\sigma \coloneqq (A_{\sigma(1)},A_{\sigma(2)},\ldots,A_{\sigma(n)})$ to denote the allocation wherein each agent $i \in [n]$ receives the bundle $A_{\sigma(i)}$. 

The theorem below states the characterization of envy-freeable allocations. 
\begin{theorem}\cite{halpern2019fair}
\label{thm:hs_ef}
For any allocation $\mathcal{A}=(A_1, \ldots, A_n)$, the following statements are equivalent:
\begin{enumerate}
	\item[(i)]$\mathcal{A}$ is envy-freeable.
	\item[(ii)]$\mathcal{A}$ maximizes the social welfare across all reassignments of its bundles among the agents. 
That is, for every permutation $\sigma$ over $[n]$, we have $ \sum_{i=1}^n v_i(A_i) \geq \sum_{i=1}^n v_i(A_{\sigma(i)}) $.
	\item[(iii)] The envy-graph $G_\mathcal{A}$ has no positive-weight cycles.
\end{enumerate} 
\end{theorem}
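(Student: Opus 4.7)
The plan is to prove the three implications (i) $\Rightarrow$ (iii) $\Rightarrow$ (ii) $\Rightarrow$ (i), with the middle step in fact being a two-way equivalence that I will use in both directions.

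First, for (i) $\Rightarrow$ (iii), suppose $\A$ admits a subsidy vector $p$ making $(\A, p)$ envy-free. Rewriting the envy-free inequality $v_i(A_i) + p_i \geq v_i(A_j) + p_j$ yields $p_i - p_j \geq v_i(A_j) - v_i(A_i) = w_\A(i,j)$ for every edge $(i,j)$. Summing this inequality along any directed cycle $i_1 \to i_2 \to \cdots \to i_k \to i_1$ in $G_\A$, the $p$-terms telescope to $0$, so the cycle weight is at most $0$. Hence $G_\A$ has no positive-weight cycle.

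Next, for (iii) $\Leftrightarrow$ (ii), observe that any permutation $\sigma \in \mathbb{S}_n$ decomposes into disjoint cycles, and the change in social welfare from reassigning via $\sigma$ is $\sum_{i=1}^n \bigl(v_i(A_{\sigma(i)}) - v_i(A_i)\bigr) = \sum_{i=1}^n w_\A(i, \sigma(i))$, which is exactly the sum of the weights of these cycles in $G_\A$. Thus (iii) immediately implies every such reassignment is non-improving, giving (ii); and conversely, a single positive-weight cycle (padded by identity on the remaining agents) would produce a welfare-improving permutation, so (ii) implies (iii).

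Finally, for (iii) $\Rightarrow$ (i), I will construct an explicit subsidy vector from longest paths. Define $p_i$ to be the maximum weight of any directed path in $G_\A$ that starts at vertex $i$ (allowing the empty path of weight $0$). Because $G_\A$ has no positive-weight cycle, such a maximum is attained by a simple path and is finite; moreover, the empty path guarantees $p_i \geq 0$, as required. To verify envy-freeness, fix any $i, j$ and consider the path obtained by prepending the edge $(i,j)$ to a weight-$p_j$ path starting at $j$; its weight $w_\A(i,j) + p_j$ is a candidate in the maximization defining $p_i$, so $p_i \geq w_\A(i,j) + p_j$, which rearranges back to $v_i(A_i) + p_i \geq v_i(A_j) + p_j$. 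Thus $(\A, p)$ is an envy-free solution.

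The only non-mechanical step is the construction of $p$ in the last implication; the main subtlety there is simply getting the direction of the envy edges and the orientation of the longest path correct so that the telescoping inequality lands on the desired side. Computationally, the acyclicity test and the longest-path computation are both polynomial-time (e.g., via a condensation argument or Bellman--Ford on $-w_\A$), which yields the promised algorithmic consequences of the theorem.
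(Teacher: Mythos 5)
The paper does not actually prove this theorem: it is quoted from Halpern and Shah \cite{halpern2019fair} (with a footnoted nod to Aragones), so there is no in-paper argument to compare against. Your proof is correct and is essentially the standard one from that source: (i)$\Rightarrow$(iii) by telescoping the inequalities $p_i - p_j \geq w_\A(i,j)$ around a cycle; (ii)$\Leftrightarrow$(iii) via the cycle decomposition of a permutation, with fixed points contributing weight $w_\A(i,i)=0$; and (iii)$\Rightarrow$(i) via the longest-path subsidies, which are exactly the vector $p^*_i = \ell_\A(i)$ that the paper's Theorem \ref{thm:hs_sub} (also cited without proof) uses downstream. The one place you gloss is in (iii)$\Rightarrow$(i): prepending the edge $(i,j)$ to a maximum-weight path starting at $j$ need not produce a \emph{path}, since that path may already pass through $i$ (the paper's convention, stated in a footnote, is that paths visit distinct vertices). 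The one-line repair is standard: the resulting walk then contains a cycle through $i$, which by (iii) has weight at most $0$, so excising it leaves a genuine path from $i$ of weight at least $w_\A(i,j) + p_j$; equivalently, define $p_i$ as the supremum over walks and observe that, absent positive-weight cycles, this supremum is finite and attained by a simple path. With that sentence added, the argument is complete.
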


Condition $(ii)$ of Theorem~\ref{thm:hs_ef} implies that, starting with an arbitrary allocation $\mathcal{B}=(B_1, \ldots, B_n)$, we can always find an envy-freeable allocation by reassigning the bundles $B_i$s among the agents. In particular, if $\sigma$ is any maximum-weight matching between the agents and the bundles, then $B_\sigma$ is an envy-freeable allocation. 

Complementing the characterization, Theorem \ref{thm:hs_sub} (below) identifies an efficient algorithm to compute a subsidy vector for any given envy-freeable allocation $\A$. Here, $\ell_\A(i)$ is used to denote the maximum weight of any path (in $G_\mathcal{A}$) which starts at $i \in [n]$. Since the envy-graph $G_\A$ (for an envy-freeable allocation $\A$) does not contain any positive-weight cycle (condition $(iii)$ in Theorem \ref{thm:hs_ef}), the weight $\ell_\A(i)$ is well-defined and can be computed efficiently. In particular, we can apply the Floyd-Warshall algorithm, after negating the edge weights. The  theorem asserts that providing a subsidy of $\ell_\A(i)$, to each agent $i$, suffices to achieve envy-freeness. 

\begin{theorem}\label{thm:hs_sub}\cite{halpern2019fair}
For any envy-freeable allocation $\A$, a subsidy of $p^*_i \coloneqq \ell_\A(i)$, for all agents  $i\in [n]$, realizes an envy-free solution $(\A,p^*)$; here subsidy vector $p^*=(p^*_1, \ldots, p^*_n)$. Furthermore, for any other vector $p=(p_1,\ldots, p_n) \in \mathbb{R}_+^n$, such that $(\A, p)$ is envy-free, we have $p^*_i\leq p_i$ for all $i \in [n]$. 
\end{theorem}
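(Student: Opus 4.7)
The plan is to establish two claims: first, that $(\A, p^*)$ with $p^*_i = \ell_\A(i)$ is envy-free; second, that this subsidy vector is pointwise minimal among all envy-free subsidy vectors for $\A$. The key identities are simple manipulations of path weights in $G_\A$, so the proof should be quite short; the main conceptual step is recognizing that the envy condition $v_i(A_i) + p_i \geq v_i(A_j) + p_j$ rearranges to $p_i - p_j \geq w_\A(i,j)$, turning the question into one about differences of node potentials dominating edge weights.

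For the first claim, I would first note that $p^*_i \geq 0$ for every $i$, since the empty (zero-length) path starting at $i$ has weight $0$, and $\ell_\A(i)$ is the maximum over all paths from $i$ (including this empty one). Next, for any pair $i,j \in [n]$, let $P^*_j$ be a path from $j$ achieving weight $\ell_\A(j)$. Prepending the edge $(i,j)$ to $P^*_j$ produces a path from $i$ of weight $w_\A(i,j) + \ell_\A(j)$, and since $\ell_\A(i)$ is the maximum over all paths from $i$, we get $\ell_\A(i) \geq w_\A(i,j) + \ell_\A(j)$, i.e., $p^*_i - p^*_j \geq w_\A(i,j) = v_i(A_j) - v_i(A_i)$. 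Rearranging gives the envy-free inequality.

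For the second claim (minimality), suppose $p = (p_1, \ldots, p_n) \in \mathbb{R}_+^n$ is any subsidy vector for which $(\A, p)$ is envy-free. Then for every edge $(i,j)$ of $G_\A$ we have $p_i - p_j \geq w_\A(i,j)$. Given any path $P = (i_0, i_1, \ldots, i_k)$ in $G_\A$ with $i_0 = i$, telescoping yields
\[
p_i - p_{i_k} \;=\; \sum_{t=0}^{k-1} (p_{i_t} - p_{i_{t+1}}) \;\geq\; \sum_{t=0}^{k-1} w_\A(i_t, i_{t+1}) \;=\; w_\A(P).
\]
Since $p_{i_k} \geq 0$, this gives $p_i \geq w_\A(P)$ for every path $P$ starting at $i$. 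Taking the maximum over such paths, $p_i \geq \ell_\A(i) = p^*_i$.

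I do not anticipate a substantive obstacle here; the argument is essentially a shortest-/longest-path duality on $G_\A$, and the well-definedness of $\ell_\A(i)$ (which is needed to make $p^*$ finite) is already guaranteed by condition $(iii)$ of Theorem~\ref{thm:hs_ef}, since the absence of positive-weight cycles ensures that the longest-path values from each vertex are attained by simple paths and hence bounded. The only small care needed is the nonnegativity check $p^*_i \geq 0$, which is handled by including the empty path in the maximum.
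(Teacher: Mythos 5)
The paper does not prove this statement itself; it is imported verbatim from Halpern and Shah \cite{halpern2019fair}, so there is no in-paper argument to compare against. Your overall strategy---treating the subsidies as node potentials that must dominate edge weights, and running the longest-path duality in both directions---is the standard one, and the minimality half is airtight: the telescoping bound $p_i - p_{i_k} \geq w_\A(P)$ combined with $p_{i_k} \geq 0$ gives exactly $p_i \geq \ell_\A(i)$.

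There is, however, one step in the first half that does not go through as written. You assert that prepending the edge $(i,j)$ to a maximum-weight path $P^*_j$ starting at $j$ ``produces a path from $i$.'' It need not: the paper's convention is that a path joins \emph{distinct} vertices, and $P^*_j$ may already pass through $i$, in which case $i \to P^*_j$ is only a walk. The inequality $\ell_\A(i) \geq w_\A(i,j) + \ell_\A(j)$ is still true, but to obtain it you must split the walk at its return to $i$ into a cycle $C$ followed by a path $P'$ starting at $i$, and invoke condition $(iii)$ of Theorem~\ref{thm:hs_ef} to conclude $w_\A(C) \leq 0$, whence $w_\A(i,j) + \ell_\A(j) = w_\A(C) + w_\A(P') \leq w_\A(P') \leq \ell_\A(i)$. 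This is not a cosmetic point: the inequality fails in graphs with positive-weight cycles (two vertices with $w(i,j) = w(j,i) = 1$ give $\ell(i) = \ell(j) = 1$ yet $w(i,j) + \ell(j) = 2$), so the envy-freeability hypothesis must enter the first claim through precisely this step, not merely through the finiteness of $\ell_\A$ as your closing remark suggests. With that repair the proof is complete.
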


For any subsidy vector $p=(p_1, \ldots, p_n)$ and permutation $\sigma$ over $[n]$, let $p_\sigma=(p_{\sigma(1)},p_{\sigma(2)},\ldots,p_{\sigma(n)})$ denote the subsidies obtained by reassigning according to $\sigma$, i.e., under $p_\sigma$, agent $i$ receives a subsidy $p_{\sigma(i)}$. 

The following lemma shows that if reassigning bundles, according to some permutation $\sigma$, maintains envy-freeability, then, in fact, reassigning the subsidies---also according to $\sigma$---preserves envy-freeness. A proof of this lemma appears in Appendix \ref{appendix:lemma-shuffle}.\footnote{We prove Lemma~\ref{lem:reshuffle} for general valuations. Hence, it holds in particular for dichotomous valuations.}
\begin{restatable}{lemma}{LemmaShuffle} 
\label{lem:reshuffle}
Let $ (\mathcal{A},p) $ be an envy-free solution and let $\sigma$ be a permutation such that allocation $ \mathcal{B} \coloneqq \mathcal{A}_\sigma $ is envy-freeable. Then, $(\mathcal{B},p_\sigma)$ is an envy-free solution as well. 
\end{restatable}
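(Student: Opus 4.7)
The plan is to show that the envy-free inequality for $(\mathcal{B}, p_\sigma)$ reduces to the envy-free inequality for $(\mathcal{A}, p)$ via an \emph{equality} that I can extract by a welfare-sum argument. Specifically, writing out what needs to be proved, the desired inequality $v_i(A_{\sigma(i)}) + p_{\sigma(i)} \geq v_i(A_{\sigma(j)}) + p_{\sigma(j)}$ would follow immediately from the envy-free condition for $(\mathcal{A}, p)$ applied to the pair $(i, \sigma(j))$, provided we can establish the identity
\[
v_i(A_i) + p_i \; = \; v_i(A_{\sigma(i)}) + p_{\sigma(i)} \qquad \text{for every } i\in [n].
\]
So the main task is to prove this pointwise equality.

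To prove it, I would start from the envy-free inequality $v_i(A_i) + p_i \geq v_i(A_{\sigma(i)}) + p_{\sigma(i)}$ (which holds for each $i$) and sum over all agents. Because $(p_{\sigma(1)}, \ldots, p_{\sigma(n)})$ is simply a permutation of $(p_1, \ldots, p_n)$, the subsidy terms cancel out of the sum, leaving
\[
\sum_{i=1}^{n} v_i(A_i) \; \geq \; \sum_{i=1}^{n} v_i(A_{\sigma(i)}).
\]
The reverse inequality comes from the hypothesis that $\mathcal{B} = \mathcal{A}_\sigma$ is envy-freeable: by condition (ii) of Theorem~\ref{thm:hs_ef}, $\mathcal{B}$ maximizes social welfare across all reassignments of its own bundles, and $\mathcal{A}$ is one such reassignment (via $\sigma^{-1}$). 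Hence the two welfare sums are equal, which forces each of the summed inequalities to hold with equality, yielding the displayed identity above.

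Once this equality is in hand, the proof wraps up in one line: substitute it into the left-hand side of the target inequality and invoke the envy-free condition for $(\mathcal{A}, p)$ applied to the pair $(i, \sigma(j))$. I do not foresee a genuine obstacle; the only subtlety is recognizing that envy-freeability of $\mathcal{B}$ is used precisely to pin down the welfare sum (not its envy-graph structure directly), so that summing the original envy-free inequalities collapses to equalities rather than mere inequalities.
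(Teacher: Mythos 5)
Your proposal is correct and follows essentially the same route as the paper's proof: sum the envy-free inequalities $v_i(A_i)+p_i \geq v_i(A_{\sigma(i)})+p_{\sigma(i)}$, use the envy-freeability of $\mathcal{B}$ (via condition (ii) of Theorem~\ref{thm:hs_ef}) together with the cancellation of the permuted subsidies to force termwise equality, and then conclude by combining that equality with the envy-freeness of $(\mathcal{A},p)$ for the pair $(i,\sigma(j))$. The only cosmetic difference is that you derive one direction of the welfare identity directly from the summed envy-free inequalities, whereas the paper invokes condition (ii) for both $\mathcal{A}$ and $\mathcal{B}$; the substance is identical.
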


\section{Our Results}
The next theorem is the main result of the current work and it shows that in any discrete fair division instance with dichotomous valuations there exists an envy-free solution with a subsidy of at most $1$ per agent. 
\begin{restatable}{theorem}{mainthm}
\label{thm:main}
For any discrete fair division instance $\langle [n],[m],\{v_i\}_{i=1}^n\rangle$ with dichotomous valuations, there exists an envy-free solution $(\mathcal{A},p)$ such that $p\in \{0,1\}^n$. Furthermore, given value oracle access to the $v_i$s, such an envy-free solution, $(\mathcal{A}, p)$, can be computed in polynomial time.
\end{restatable}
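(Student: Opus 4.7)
The plan is to realize the envy-free solution by exhibiting an envy-freeable allocation $\A$ whose pointwise-minimum subsidies $\ell_\A(\cdot)$ from Theorem~\ref{thm:hs_sub} take values only in $\{0,1\}$. The starting observation is that under dichotomous valuations every envy-graph weight $w_\A(i,j) = v_i(A_j)-v_i(A_i)$ is an integer, so each $\ell_\A(i)$ is a non-negative integer. It therefore suffices to rule out the existence of a positive-weight directed path of weight $\geq 2$ in $G_\A$.

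By Theorem~\ref{thm:hs_ef}(ii), any allocation that maximizes the social welfare $\sum_i v_i(A_i)$ is envy-freeable, so a natural starting point is a welfare-maximizing $\A$. This alone is not enough, though: in a two-agent instance where both agents are additive with value $1$ on every good and all goods go to agent~$1$, the allocation is welfare-maximizing yet forces agent~$2$'s subsidy to be as large as $m$. I would therefore choose the welfare-maximizing allocation by a secondary lexicographic criterion---for instance, among all welfare-maximizing allocations pick one lex-minimizing the sorted subsidy vector $(\ell_\A(i))_{i}$, or equivalently (for dichotomous valuations) lex-maximizing the sorted bundle-value vector $(v_i(A_i))_{i}$. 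Lemma~\ref{lem:reshuffle} ensures that welfare-preserving reassignments of the bundles remain envy-free throughout the search.

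The core of the proof is a local-improvement lemma: whenever an allocation $\A$ is welfare-maximal but some $\ell_\A(i_0) \geq 2$, there is a short sequence of good-transfers that produces another welfare-maximal allocation $\A'$ strictly lowering the lex-potential. My intended route is to extract a heaviest positive-weight path $i_0 = i_1 \to i_2 \to \cdots \to i_k$ in $G_\A$; since the closing edge $i_k \to i_1$ must be sufficiently negative to avoid a positive cycle (Theorem~\ref{thm:hs_ef}(iii)), the binary-marginal property should force some agent on the path to hold a good whose marginal is $1$ for the previous agent. Shifting such goods along the path decreases the path weight without dropping the welfare, giving the required improvement. Iterating until no positive path of weight $\geq 2$ survives yields $\ell_\A(i) \in \{0,1\}$ for every $i$; since the potential is integer-valued with polynomial range, this runs in polynomially many value-oracle queries.

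The main obstacle is establishing the improving transfer in full generality. Dichotomous valuations need not be submodular or even subadditive, so a bundle can have positive value with no single element contributing a marginal on its own, and standard exchange lemmas fail. I expect to handle this by analyzing the ``useful'' goods along the heaviest positive path via an auxiliary bipartite or flow-based structure, using the no-positive-cycle condition of Theorem~\ref{thm:hs_ef}(iii) to certify a feasible transfer. Making this argument precise while simultaneously controlling welfare and the lex-potential, and turning it into an efficient value-oracle procedure, will be the most delicate part of the proof.
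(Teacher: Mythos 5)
Your proposal takes a genuinely different route from the paper (the paper builds the allocation incrementally, good by good, maintaining as an invariant that the current partial allocation is envy-freeable with subsidies in $\{0,1\}$, using a reassignment-plus-assignment step when a ``most subsidized'' agent has marginal $1$ for the new good, and a separate sink-finding argument otherwise). However, your plan has a genuine gap at exactly the step you flag: the local-improvement lemma is asserted, not proved, and in the stated form it is likely false. For general dichotomous valuations a positive edge weight $w_\A(i,j)=v_i(A_j)-v_i(A_i)>0$ does \emph{not} imply that some single good $g\in A_j$ has $v_i(A_i\cup\{g\})-v_i(A_i)=1$; e.g., $v_i(S)=\mathbb{1}[S\supseteq T]$ is dichotomous, and agent $i$ can value $A_j\supseteq T$ as a whole while every single-good transfer out of $A_j$ has zero marginal for her. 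So ``shifting such goods along the path'' may have no welfare-preserving single-good candidate, and once you allow multi-good transfers you lose control of both the welfare and the other edge weights. The paper's incremental design exists precisely to sidestep this: each good is inserted when its marginal for the receiving bundle is a well-defined $0$ or $1$, and Proposition~\ref{lem:weight_increase} then bounds how a single insertion perturbs path weights, so no exchange argument on a completed allocation is ever needed.

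Two further problems: (i) your claimed equivalence between lex-minimizing the sorted subsidy vector and lex-maximizing the sorted bundle-value vector is unjustified (the subsidy $\ell_\A(i)$ is a max-weight path quantity depending on the whole envy graph, not on $v_i(A_i)$ alone); and (ii) the termination bound does not follow as stated --- a lexicographic potential over sorted integer vectors strictly improving each round gives only a bound of the number of such vectors, which is not polynomial in $n$ and $m$, so ``polynomial range'' is not established. Where your approach does align with the paper is in the supporting machinery: both use Theorem~\ref{thm:hs_ef}(ii)--(iii) to certify envy-freeability via welfare-maximality over reassignments and the absence of positive cycles, and both use the integrality of subsidies under dichotomous valuations to reduce the goal to excluding paths of weight at least $2$. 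But the core combinatorial step of your plan is missing, so the proof is incomplete as it stands.
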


This theorem implies that, to achieve envy-freeness under dichotomous valuations, the total required subsidy is at most $(n-1)$, i.e., $\sum_{i=1}^n p_i \leq n-1$. Note that the case in which $p_i = 1$, for all agents $i$, can be addressed by giving each agent a subsidy of $0$.

\section{Finding Envy-Free Solutions for Dichotomous Valuations}
\label{section:main-algorithm}
For dichotomous valuations, our algorithm $\textsc{Alg}$ (Algorithm~\ref{Alg:BinSubsidy}) finds an envy-free solution $(\A,p)$ such that the subsidy vector $p\in \{0,1\}^n$. In particular, with iteration counter $t$, the algorithm inductively maintains an envy-free solution $(\A^t, p^t)$ wherein $\A^t=(A^t_1, \ldots A^t_n)$ is a (partial) envy-freeable allocation and the subsidy vector $p^t \in \{0,1\}^n$.  In every iteration $t$, the algorithm selects an unallocated good $g \in [m] \setminus \left( \cup_{i=1}^n A^t_i \right)$, includes it in one of the current bundles, $A^t_1, \ldots, A^t_n$, and possibly reassigns the bundles among the agents such that the updated allocation, $\A^{t+1}$, is also envy-freeable with subsidies in $\{0, 1\}$. 

We note that a relatively simple case occurs if---in any iteration $t$ and for an unallocated good $g$---there exists an agent $k \in [n]$ such that ({\rm a}) the marginal value of good $g$ for agent $k$ is equal to $1$ (i.e., $v_k(A^t_k\cup\{g\})-v_k(A^t_k)=1$) and ({\rm b}) agent $k$'s subsidy (under $p^t=(p^t_1, \ldots, p^t_n)$) is at least as large as that of any other agent ($p^t_k \geq p^t_j$ for all $j$). In such a case, the allocation $\A^{t+1}$ obtained by assigning the good $g$ to the agent $k$ is envy-freeable, since the social welfare of $\A^{t+1}$ is one more than the social welfare of $\A^t$ and this is the maximum possible among all the reassignments of the bundles in $\A^{t+1}$ (see condition $(ii)$ in Theorem~\ref{thm:hs_ef}). Furthermore, for the allocation $\A^{t+1}$, the accompanying subsidies remain in $\{0,1\}$. This essentially follows from the facts that $k$ was one of the most subsidized agent and, when we assign the good $g$ to agent $k$, only the weights of the edges incident on $k$ (in the updated envy graph) change. In particular, the weight of the edges going out of $k$ necessarily decrease by $1$ and for each incoming edge the weight increases by at most $1$.\footnote{Note that under dichotomous valuations the edge weights in the envy graph are integers.} Using these observations one can show that the weights of all paths remain below $1$ and, hence (via Theorem \ref{thm:hs_sub}), the subsidies remain in $\{0,1\}$.

Even if there does not exist an agent $k$ that directly satisfies the above-mentioned conditions ({\rm a}) and ({\rm b}), we can still try to first reassign the current bundles (among the agents) and then look for such an agent. Specifically, the case detailed above continues to be relevant if there exists a permutation $\sigma$ such that allocation $\mathcal{B} = \A^t_\sigma$ is envy-freeable,\footnote{Here, the envy-freeability of $\mathcal{B}$ ensures that the accompanying subsidies are still either $0$ or $1$; see Lemma~\ref{lem:reshuffle}.} and, under $\mathcal{B}$, there exists an agent $\kappa$ that satisfies conditions ({\rm a}) and ({\rm b}). Encapsulating this case,\footnote{Note that with $\sigma$ as the identity permutation, one can capture the setting wherein a reassignment of the bundles is not required.} we will say that the current allocation $\A^t$ is \emph{extendable}, with good $g$, iff the desired permutation $\sigma$ and agent $\kappa$ exist; see Definition \ref{def:extend}. 

Section \ref{sec:extend} develops a subroutine, \textsc{Extend} (Algorithm \ref{Alg:Extend}), that efficiently identifies whether the current allocation $\A^t$ is extendable. If the allocation is extendable, then the subroutine returns the relevant permutation $\sigma$ and the agent $\kappa$. Using this subroutine, our algorithm \textsc{Alg} addresses the case of extendable solutions (in Lines \ref{line:if-extendable} to \ref{line:update-extend}). In particular, \textsc{Alg} allocates good $g$ and updates the envy-free solution $(\A^t, p^t)$ to $(\A^{t+1}, p^{t+1})$, while maintaining the invariant that the subsidies are either $0$ or $1$.

\floatname{algorithm}{Algorithm}
\begin{algorithm}[h!]
	\caption{\textsc{Alg}}\label{Alg:BinSubsidy} 
	\textbf{Input:} Instance $\instance{[n], [m], \val}$ with value oracle access to dichotomous valuations $v_i$s. \\
	\textbf{Output:} Allocation $\A$ and subsidy vector $p$.
	\begin{algorithmic}[1]
		\STATE Initialize index $t=1$ along with bundles $A^t_1= \ldots = A^t_n = \emptyset$ and subsidies $p^t_1 = \ldots = p^t_n = 0$. \label{step:initialize} 
		\COMMENT{Here, allocation $\mathcal{A}^t = (\emptyset, \ldots, \emptyset)$ and subsidy vector $p^t = (0, \ldots, 0)$.}
		\WHILE {$[m] \setminus \left( \cup_{i\in [n]} A^t_i \right) \neq \emptyset$}\label{goods_remain}
		\STATE Select any unallocated good $g \in [m] \setminus \left( \cup_{i\in [n]} A^t_i \right)$.\label{pick_good}
		 \IF{$(\A^t,p^t)$ is extendable with good $g$} \label{line:if-extendable}
		 \STATE Set $(\sigma,\kappa) \coloneqq \textsc{Extend} (\A^t,p^t,g)$ and set allocation $(B_1, \ldots, B_n) \coloneqq \A^t_\sigma$. \label{t+1_extend} \\ \COMMENT{$(B_1, \ldots, B_n)$ is obtained by reassigning bundles in $\mathcal{A}^t$, according to permutation $\sigma$.}
		  
		 \STATE Set allocation $\A^{t+1} \coloneqq (B_1,\ldots,B_{\kappa}\cup \{g\},\ldots, B_n)$. \label{line:update-extend} \\ \COMMENT{Good $g$ is included in agent $\kappa$'s bundle $B_\kappa$.} 
		 \ELSE
		 \STATE Let agent $s \coloneqq \textsc{FindSink}(\A^t,p^t,g)$. \label{line:find-sink}
		 \STATE Set allocation $\A^{t+1} \coloneqq (A^t_1,\ldots,A^t_s \cup \{g\},\ldots,A^t_n)$. \label{line:sink-assign} \COMMENT{Good $g$ is included in agent $s$'s bundle $A^t_s$.}
		\ENDIF
		\STATE Compute subsidy vector $p^{t+1}$ for the allocation $\A^{t+1}$. \COMMENT{We will prove that, in all cases, $\A^{t+1}$ is envy-freeable and invoke Theorem \ref{thm:hs_sub} to compute the subsidies.} \label{step:payment}
		\STATE Update $t\leftarrow t+1$			
		\ENDWHILE \label{step:endfor}
		\RETURN allocation $\A^t=(A^t_1, \ldots, A^t_n)$ and subsidy vector $p^t=(p_1^t,\ldots,p^t_n)$.
	\end{algorithmic}
\end{algorithm}

In the complementary case, wherein the current allocation $\A^t$ is not extendable, \textsc{Alg} (in Line \ref{line:find-sink}) invokes the subroutine \textsc{FindSink} (detailed in Section~\ref{sec:non-extend}). A key technical insight here is that, under dichotomous valuations and for an allocation $\A^t$ that is \emph{not} extendable with a good $g$, there necessarily exists an agent $s$ such that assigning $g$ to $s$ maintains the subsidies in $\{0,1\}$.\footnote{For agent $s$, the first condition of extendability might not hold.} The subroutine \textsc{FindSink} directly finds such an agent by trying to assign $g$ to different agents iteratively and checking whether subsidies remain in $\{0,1\}$. It is relevant to note that, while the subroutine is simple in design, its analysis requires intricate existential arguments. In particular, Section~\ref{sec:non-extend} establishes that \textsc{FindSink} finds such an agent $s \in [n]$ in polynomial time and, hence, provides a constructive proof of existence of the desired agent $s$.    

Therefore, in every possible case, our algorithm \textsc{Alg} assigns a good and updates the allocation, all the while maintaining the invariant that the allocation in hand is envy-freeable and requires subsidies that are either $0$ or $1$. 

For the analysis of the algorithm, we will use the following two propositions, which hold for dichotomous valuations. The proofs of the propositions are deferred to Appendix \ref{appendix:prop-proofs}. 

\begin{restatable}{proposition}{PropSW}
\label{lem:sw_increase}
Let $\mathcal{Y}=(Y_1, \ldots, Y_n)$ be an envy-freeable (partial) allocation. Also, assume that, for an agent $x \in [n]$ and an unallocated good $g$, we have $v_x(Y_x \cup \{g \}) - v_x(Y_x) = 1$. Then, the allocation $(Y_1,\ldots, Y_x\cup \{g\}, \ldots, Y_n)$ is envy-freeable as well. 
\end{restatable}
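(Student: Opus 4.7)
The plan is to use condition $(ii)$ of Theorem~\ref{thm:hs_ef}: an allocation is envy-freeable if and only if it maximizes social welfare across all reassignments of its bundles among the agents. So the goal is to show that $\mathcal{Y}' \coloneqq (Y_1, \ldots, Y_x \cup \{g\}, \ldots, Y_n)$ has social welfare at least that of any reassignment $\mathcal{Y}'_\sigma$.

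First, I would compute the social welfare of $\mathcal{Y}'$ itself, which, by the hypothesis $v_x(Y_x \cup \{g\}) - v_x(Y_x) = 1$, is exactly $\sum_{i=1}^n v_i(Y_i) + 1$. Next, I would fix an arbitrary permutation $\sigma$ and bound the social welfare of $\mathcal{Y}'_\sigma$. Let $j = \sigma^{-1}(x)$ be the agent who, under $\sigma$, receives the enlarged bundle $Y_x \cup \{g\}$. Then
\[
\sum_{i=1}^n v_i\bigl((\mathcal{Y}'_\sigma)_i\bigr) \;=\; v_j(Y_x \cup \{g\}) \;+\; \sum_{i \neq j} v_i(Y_{\sigma(i)}).
\]
Because the valuations are dichotomous, every marginal lies in $\{0,1\}$, so in particular $v_j(Y_x \cup \{g\}) \leq v_j(Y_x) + 1$. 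Substituting this bound gives
\[
\sum_{i=1}^n v_i\bigl((\mathcal{Y}'_\sigma)_i\bigr) \;\leq\; \sum_{i=1}^n v_i(Y_{\sigma(i)}) \;+\; 1.
\]

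Finally, since $\mathcal{Y}$ is envy-freeable, condition $(ii)$ of Theorem~\ref{thm:hs_ef} yields $\sum_i v_i(Y_{\sigma(i)}) \leq \sum_i v_i(Y_i)$, and chaining the two inequalities gives that the welfare of $\mathcal{Y}'_\sigma$ is at most $\sum_i v_i(Y_i) + 1$, which equals the welfare of $\mathcal{Y}'$. Since $\sigma$ was arbitrary, $\mathcal{Y}'$ maximizes welfare over all reassignments and is therefore envy-freeable by Theorem~\ref{thm:hs_ef}.

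I do not anticipate a real obstacle here: the argument is essentially a one-line exchange inequality, and the only place dichotomy is used is the bound $v_j(Y_x \cup \{g\}) \leq v_j(Y_x) + 1$. The only small thing to be careful about is handling the case $j = x$ (i.e., $\sigma$ leaves agent $x$ with his own enlarged bundle) cleanly; here the inequality above becomes an equality and the conclusion is immediate from envy-freeability of $\mathcal{Y}$.
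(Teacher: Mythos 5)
Your proof is correct and follows essentially the same route as the paper's: both verify condition $(ii)$ of Theorem~\ref{thm:hs_ef} by comparing the welfare of an arbitrary reassignment against that of the augmented allocation, using dichotomy to bound the marginal of $g$ by $1$ and envy-freeability of $\mathcal{Y}$ to bound the remaining terms. The only difference is the order in which these two bounds are applied, which is immaterial.
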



\begin{restatable}{proposition}{PropWeight}
\label{lem:weight_increase}
For any envy-freeable allocation $\mathcal{Y}=(Y_1, \ldots, Y_n)$, any agent $x$, and any (unallocated) good $g$, let allocation $\mathcal{Z} \coloneqq (Y_1, \ldots, Y_x\cup \{g\},\ldots,Y_n)$. Then, in the envy graph $G_{\mathcal{Z}}$, the weights of all the edges---except the ones incident on $x$---are the same as in $G_\mathcal{Y}$, i.e., for all edges $(i,j)$, with $i, j \in [n] \setminus \{x\}$, we have $w_\mathcal{Z}(i,j) = w_\mathcal{Y}(i,j)$. Furthermore, the weights of edges $(x,j)$ going out of $x$ satisfy $w_\mathcal{Z}(x, j) \leq w_\mathcal{Y}(x,j) $, and the weights of edges $(i,x)$ coming into $x$ satisfy $w_\mathcal{Z}(i,x) \leq w_\mathcal{Y}(i, x) + 1$.
\end{restatable}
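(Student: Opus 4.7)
The plan is to do a direct case analysis on the three types of edges in the envy graph $G_{\mathcal Y}$ versus $G_{\mathcal Z}$, using the definition $w_{\mathcal A}(i,j) = v_i(A_j) - v_i(A_i)$ together with the observation that the bundles in $\mathcal Z$ differ from those in $\mathcal Y$ only at position $x$ (where $Y_x$ becomes $Y_x \cup \{g\}$). All three cases reduce to one-line computations; the only property of the valuations that we invoke beyond monotonicity is the dichotomous condition $v_i(S \cup \{g\}) - v_i(S) \in \{0,1\}$.

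First I would handle edges $(i,j)$ with $i,j \in [n]\setminus\{x\}$: here neither $Y_i$ nor $Y_j$ is modified when we pass to $\mathcal Z$, so $w_{\mathcal Z}(i,j) = v_i(Y_j) - v_i(Y_i) = w_{\mathcal Y}(i,j)$, giving equality. Next, for an outgoing edge $(x,j)$ with $j\neq x$, the bundle $Y_j$ is unchanged but $Y_x$ is replaced by $Y_x \cup \{g\}$, so
\[
w_{\mathcal Z}(x,j) \;=\; v_x(Y_j) - v_x(Y_x \cup \{g\}) \;\leq\; v_x(Y_j) - v_x(Y_x) \;=\; w_{\mathcal Y}(x,j),
\]
where the inequality uses monotonicity of $v_x$ (which follows from the dichotomous assumption).

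Finally, for an incoming edge $(i,x)$ with $i\neq x$, the bundle $Y_i$ is unchanged and $Y_x$ is replaced by $Y_x \cup \{g\}$, so
\[
w_{\mathcal Z}(i,x) \;=\; v_i(Y_x \cup \{g\}) - v_i(Y_i) \;\leq\; \bigl(v_i(Y_x) + 1\bigr) - v_i(Y_i) \;=\; w_{\mathcal Y}(i,x) + 1,
\]
where the inequality is exactly the dichotomous-marginals condition $v_i(Y_x \cup \{g\}) - v_i(Y_x) \leq 1$ applied to agent $i$'s valuation.

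There is no real obstacle in this proof; it is purely a bookkeeping argument. The only subtle point worth stating explicitly in the write-up is that envy-freeability of $\mathcal Y$ is actually not used in the statement (the conclusion is a pointwise comparison of edge weights), and that the bound of $+1$ for incoming edges is tight precisely because dichotomous valuations permit a unit marginal. I would therefore keep the proof to three short displayed inequalities, one per case.
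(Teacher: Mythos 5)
Your proof is correct and follows essentially the same three-case bookkeeping argument as the paper: equality for edges avoiding $x$, monotonicity for outgoing edges, and the unit-marginal (dichotomous) bound for incoming edges. Your side remark that envy-freeability of $\mathcal{Y}$ is not actually needed is accurate, as the paper's own proof also never invokes it.
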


For allocation $\mathcal{Z} \coloneqq (Y_1, \ldots, Y_x\cup \{g\},\ldots,Y_n)$, Proposition \ref{lem:weight_increase} implies that the weight of any path\footnote{Note that, by definition, a path joins \emph{distinct} vertices.} $P$ in the envy graph $G_{\mathcal{Z}}$ is at most one more than the weight of $P$ in the envy graph $G_\mathcal{Y}$, i.e., $w_{\mathcal{Z}}(P) \leq w_{\mathcal{Y}}(P)+1$. 

\subsection{Extendable Solutions}\label{sec:extend}
As mentioned previously, a relevant case for our algorithm occurs when, for an unallocated good $g$, there exists an agent $\kappa$ such that ({\rm a}) the marginal value of good $g$ for agent $\kappa$ is equal to one and ({\rm b}) agent $\kappa$ is one of the most subsidized agents. The notion of extendability (defined below) generalizes this case. Recall that  $\M(p)$---for any subsidy vector $p=(p_1, \ldots, p_n)$---denotes the set of the most subsidized agents, $\M(p) \coloneqq \left\{ i\in [n] \mid  p_i \geq p_j  \text{ for all } j\in [n] \right\}$. 

\begin{definition}[Extendable Solutions] \label{def:extend}
	An envy-free solution $ (\mathcal{A},p) $ is said to be extendable with good $g\in [m] \setminus \left( \cup_{i\in [n]}A_i \right)$ iff there exists a permutation $\sigma$ (over $[n]$) such that: 
	\begin{enumerate}
		\item [(i)] Allocation $\mathcal{B} \coloneqq \mathcal{A}_\sigma$ and subsidy vector $q \coloneqq p_\sigma$ constitute an envy-free solution $(\mathcal{B},q)$, and 
		\item [(ii)] There exists an agent $\kappa \in \M(q)$ with the property that $v_{\kappa}(B_{\kappa}\cup \{ g \}) - v_{\kappa}(B_{\kappa}) = 1$.
	\end{enumerate}
\end{definition}

The following lemma shows that, for an extendable solution $(\A, p)$, we can allocate the good $g$ and still maintain per-agent subsidy to be either $0$ or $1$. 

\begin{lemma}\label{lem:extendabillity}
In a fair division instance with dichotomous valuations, let $(\A, p) $ be an envy-free solution extendable with good $g$ and permutation $\sigma$. Also, assume that the subsidies $p \in \{0,1\}^n$. Then, the allocation $(B_1,\ldots,B_\kappa \cup \{g\},\ldots,B_n)$ is also envy-freeable with a subsidy of either $0$ or $1$ for each agent. Here, allocation $(B_1, \ldots, B_n) \coloneqq A_\sigma$ and $\kappa$ is the agent identified in the extendability criteria (Definition \ref{def:extend}).
\end{lemma}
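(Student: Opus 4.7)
The plan is to first apply Definition~\ref{def:extend}(i) to extract the reshuffled envy-free solution $(\mathcal{B}, q) \coloneqq (\mathcal{A}_\sigma, p_\sigma)$; since $p \in \{0,1\}^n$, we have $q \in \{0,1\}^n$, and because $\kappa \in \M(q)$ the value $q_\kappa$ is either $0$ or $1$. The analysis then splits on $q_\kappa$. Throughout, let $\mathcal{C} \coloneqq (B_1, \ldots, B_\kappa \cup \{g\}, \ldots, B_n)$ denote the target allocation.

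If $q_\kappa = 1$, I would propose the explicit subsidy vector $p' \in \{0,1\}^n$ given by $p'_\kappa \coloneqq 0$ and $p'_i \coloneqq q_i$ for every $i \neq \kappa$, and verify directly that $(\mathcal{C}, p')$ is envy-free. Envy inequalities between agents other than $\kappa$ are inherited unchanged from $(\mathcal{B}, q)$. For an agent $i \neq \kappa$ comparing against $\kappa$, the dichotomous bound $v_i(B_\kappa \cup \{g\}) \leq v_i(B_\kappa) + 1 = v_i(B_\kappa) + q_\kappa$ combined with the envy-freeness of $(\mathcal{B}, q)$ closes the check. For $\kappa$ comparing against any other agent, the drop of $1$ in $\kappa$'s subsidy (from $q_\kappa = 1$ to $p'_\kappa = 0$) is exactly compensated by $\kappa$'s marginal gain of $1$ from $g$. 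In particular, envy-freeability of $\mathcal{C}$ is witnessed by $p'$.

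If $q_\kappa = 0$, then $\kappa \in \M(q)$ together with nonnegativity of subsidies forces $q = (0, \ldots, 0)$, so $\mathcal{B}$ is already envy-free. Envy-freeability of $\mathcal{C}$ then follows from Proposition~\ref{lem:sw_increase}. To bound the required subsidies via Theorem~\ref{thm:hs_sub}, I would argue that every simple path in $G_\mathcal{C}$ has weight at most $1$: envy-freeness of $\mathcal{B}$ yields $w_\mathcal{B}(i,j) \leq 0$ on every edge of $G_\mathcal{B}$, so Proposition~\ref{lem:weight_increase} keeps every edge of $G_\mathcal{C}$ not entering $\kappa$ at nonpositive weight, while edges entering $\kappa$ rise by at most $1$. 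A simple path contains at most one incoming edge to $\kappa$, hence its weight in $G_\mathcal{C}$ is at most $1$. Therefore $\ell_\mathcal{C}(i) \leq 1$ for every $i$, and being a nonnegative integer (trivial-path lower bound; dichotomous valuations give integer edge weights), $\ell_\mathcal{C}(i) \in \{0,1\}$, as desired.

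\textbf{Main obstacle.} The case split on $q_\kappa$ is the key design choice. The clean path-weight argument that resolves $q_\kappa = 0$ is too weak when $q_\kappa = 1$, because a path that enters $\kappa$ in the interior can absorb the $+1$ jump on its incoming edge without any compensating drop from $\kappa$'s outgoing edges, potentially pushing the path weight up to $2$. Realizing that this regime instead calls for actively decreasing $\kappa$'s subsidy by exactly the marginal value of $g$---and that this reduction is feasible precisely because $q_\kappa$ was already at the maximum value $1$---is the conceptual crux that the two-case decomposition is built around.
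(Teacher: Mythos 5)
Your proposal is correct and follows essentially the same route as the paper's proof: the same case split on $q_\kappa$, the same explicit subsidy vector $\widehat{q}$ (your $p'$) with the same three envy checks when $q_\kappa = 1$, and the same combination of Proposition~\ref{lem:sw_increase}, Proposition~\ref{lem:weight_increase}, and Theorem~\ref{thm:hs_sub} to bound path weights by $1$ when $q_\kappa = 0$. Your added remark that a simple path crosses at most one edge entering $\kappa$ just makes explicit the step the paper leaves implicit.
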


\begin{proof}
Since $(\A,p)$ is extendable (with good $g$ and permutation $\sigma$), by definition of extendability, the allocation $\mathcal{B} = (B_1, \ldots, B_n) = \A_\sigma$ is envy-freeable,  via subsidy vector $q = p_\sigma$. Also, $v_{\kappa}(B_{\kappa}\cup \{{g}\})-v_\kappa(B_\kappa)=1$ for agent $\kappa \in \M(q)$. Therefore, invoking  Proposition \ref{lem:sw_increase} (with allocation $\mathcal{Y} = \mathcal{B}$ and agent $x = \kappa$), we get that allocation $\mathcal{C} \coloneqq (B_1,\ldots,B_\kappa \cup \{g\},\ldots,B_n)$ is envy-freeable.  

We will complete the proof below by showing that for $\mathcal{C}$ the required subsidies are either $0$ or $1$. Towards this, we consider two complementary and exhaustive cases: either $q_\kappa = 0$ or $q_\kappa =1$; recall that, under the lemma assumption, the subsidy vector $p \in \{0,1\}^n$ and, hence, $q=p_\sigma \in \{0,1\}^n$.   

First, consider the case wherein $q_\kappa =0$. The fact that $\kappa \in \M(q)$ (i.e., $\kappa$ is among the most subsidized agents under $q$) implies $q_i=0$ for all agents $i \in [n]$. Hence, in the envy graph $G_\mathcal{B}$ and starting from any agent $i \in [n]$, the weight of any path is at most $0$; see Theorem \ref{thm:hs_sub} and recall that $(\mathcal{B}, q)$ is an envy-free solution. Using Proposition~\ref{lem:weight_increase} (with $\mathcal{Y} = \mathcal{B}$, $\mathcal{Z} = \mathcal{C}$, and agent $x = \kappa$), we get that in $G_\mathcal{C}$ and for any agent $i\in [n]$, the weight of any path starting at $i$ is at most $1$. Hence, Theorem \ref{thm:hs_sub} implies that, for the (envy-freeable) allocation $\mathcal{C}$, a per-agent subsidy of either $0$ or $1$ suffices to realize an envy-free solution.

Next, consider the case $q_\kappa=1$. Here, we explicitly define a subsidy vector $\widehat{q} = (\widehat{q}_1, \ldots, \widehat{q}_n) \in \{0,1\}^n$ and show that $(\mathcal{C}, \widehat{q})$ is an envy-free solution. In particular, set $\widehat{q}_\kappa = 0$ and $\widehat{q}_i = q_i$ for all agents $i \neq \kappa$. Note that for all agents $i \neq \kappa$, the bundle $C_i=B_i$ and subsidy $\widehat{q}_i=q_i$. Hence, the fact that $(\mathcal{B}, q)$ is an envy-free solution implies $v_i(C_i) + \widehat{q}_i \geq v_i(C_j) + \widehat{q}_j$ for all $i, j \in [n] \setminus \{ \kappa \}$. In addition, agent $\kappa$ does not envy any other agent $i \in [n]$:
\begin{align*}
v_\kappa(C_\kappa) + \widehat{q}_\kappa = v_\kappa(B_\kappa \cup \{g\}) + 0 = v_\kappa(B_\kappa) + 1 = v_\kappa(B_\kappa) + q_\kappa \geq v_\kappa(B_i) + q_i. 
\end{align*}   
The envy-freeness towards $\kappa$ is also maintained for all agents $i \in [n] \setminus \{ \kappa \}$:
\begin{align*}
v_i(C_i) + \widehat{q}_i = v_i(B_i) + q_i \geq v_i(B_\kappa) + q_\kappa \geq v_i(B_\kappa \cup \{ g \}) - 1 + q_k = v_i(C_\kappa) + \widehat{q}_\kappa.
\end{align*}
The last inequality follows from the fact that $v_i$ is dichotomous. These bounds together imply that $(\mathcal{C}, \widehat{q})$ is an envy-free solution and, by construction, $\widehat{q} \in \{0,1\}^n$. Therefore, even in the current case ($q_\kappa = 1$), for allocation $\mathcal{C} = (B_1,\ldots,B_\kappa \cup \{g\},\ldots,B_n)$ the required subsidies are either $0$ or $1$. The lemma stands proved. 
\end{proof}

\floatname{algorithm}{Algorithm}
\begin{algorithm}[h!]
	\caption{\textsc{Extend}}\label{Alg:Extend} 
	\textbf{Input:} Instance $\instance{[n], [m], \val}$, an envy-free solution $(\A,p)$, and a good $g\in[m]$. \\
	\textbf{Output:} A permutation and an agent that extend the input $(\A,p)$ with good $g$, or return that it is not extendable.
	\begin{algorithmic}[1]
		\FOR{each $k \in [n]$ and $\ell \in \M(p)$ with the property that $v_k(A_\ell \cup \{g \}) - v_k(A_\ell) = 1$} \label{line:select-kl}
		\STATE Consider a complete bipartite graph $H$ between sets $[n] \setminus \{k\} $ and $[n] \setminus \{ \ell \}$.  For each edge $(i,j)$ in $H$ set the weight to be $v_i(A_j)$. \label{line:def-bipartite}
		\STATE Compute a maximum-weight matching $\rho$ in $H$. \label{G'}
		\STATE Set $\rho(k) = \ell$. \COMMENT{$\rho$ is a permutation over $[n]$.} \label{line:complete-rho}
		\STATE If $\sum_{i=1}^n v_i(A_{\rho(i)}) \geq \sum_{i=1}^n v_i(A_i)$, then return $(\rho,k)$. \label{line:return-ex}
		\ENDFOR
		\RETURN ``$(\A,p)$ is not extendable with good $g$.''
	\end{algorithmic}
\end{algorithm}

The following lemma shows that the subroutine \textsc{Extend} efficiently identifies whether a given solution $(\A, p)$ is extendable, i.e., the subroutine efficiently tests whether Definition \ref{def:extend} holds, or not. The proof of the lemma appears in Appendix \ref{appendix:extend}.

\begin{restatable}{lemma}{LemmaExtendWorks}
\label{lem:extend_correct}
In a fair division instance with dichotomous valuations, let $(\mathcal{A},p)$ be an envy-free solution. 
\begin{itemize}
	\item If $(\A,p)$ is extendable with good $g \in [m]\setminus \left( \cup_{i=1}^n A_i \right)$, then the \textsc{Extend} subroutine returns a permutation $\sigma$ and an agent $\kappa$ that satisfy the extendability criteria (Definition \ref{def:extend}). 
	\item Otherwise, if $(\A,p)$ is not extendable with good $g$, then subroutine \textsc{Extend} correctly reports as such.
\end{itemize}
\end{restatable}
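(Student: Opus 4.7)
The plan hinges on a single structural observation from Theorem \ref{thm:hs_ef}(ii): since the input $\A$ is envy-freeable, $\sum_i v_i(A_i)$ equals $\max_\pi \sum_i v_i(A_{\pi(i)})$, and therefore for any permutation $\tau$, the reassigned allocation $\A_\tau$ is envy-freeable \emph{if and only if} $\sum_i v_i(A_{\tau(i)}) = \sum_i v_i(A_i)$. With this in hand, both directions of the lemma reduce largely to bookkeeping around the loop in \textsc{Extend}.

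For the forward direction, suppose $(\A,p)$ is extendable, witnessed by some permutation $\sigma$ and agent $\kappa$. I claim the loop iteration with $k = \kappa$ and $\ell = \sigma(\kappa)$ triggers a return. The marginal-value clause of Definition \ref{def:extend}(ii) becomes $v_k(A_\ell \cup \{g\}) - v_k(A_\ell) = 1$, and $\kappa \in \M(p_\sigma)$ combined with $\sigma$ being a permutation gives $p_\ell = p_{\sigma(\kappa)} \geq p_j$ for every $j$, so $\ell \in \M(p)$; hence this $(k,\ell)$ pair is considered in Line \ref{line:select-kl}. Next, $\sigma$ restricted to $[n]\setminus\{k\} \to [n]\setminus\{\ell\}$ is a perfect matching in $H$ of weight $\sum_{i\neq k} v_i(A_{\sigma(i)})$, so the max-weight matching $\rho$ computed by the algorithm satisfies $\sum_{i\neq k} v_i(A_{\rho(i)}) \geq \sum_{i\neq k} v_i(A_{\sigma(i)})$. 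Adding $v_k(A_{\rho(k)}) = v_k(A_\ell) = v_k(A_{\sigma(k)})$ to both sides, and applying the opening observation to $\A_\sigma$ (envy-freeable by extendability) to obtain $\sum_i v_i(A_{\sigma(i)}) = \sum_i v_i(A_i)$, yields $\sum_i v_i(A_{\rho(i)}) \geq \sum_i v_i(A_i)$. Thus the check in Line \ref{line:return-ex} passes and $(\rho,k)$ is returned.

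The remaining task is to verify that any pair $(\rho,k)$ returned by \textsc{Extend} actually witnesses extendability; this validates the first bullet and, by contrapositive, the second (were $(\A,p)$ extendable when the algorithm reported otherwise, the forward direction would have forced a return). Suppose $(\rho,k)$ is returned. The algorithm's inequality combined with envy-freeability of $\A$ forces $\sum_i v_i(A_{\rho(i)}) = \sum_i v_i(A_i)$, so the opening observation gives that $\A_\rho$ is envy-freeable; Lemma \ref{lem:reshuffle} then upgrades this to $(\A_\rho, p_\rho)$ being an envy-free solution, establishing Definition \ref{def:extend}(i). For condition (ii), the marginal-value condition holds by the loop's selection, and $\ell = \rho(k) \in \M(p)$ yields $p_{\rho(k)} = p_\ell \geq p_{\rho(i)}$ for all $i$, i.e., $k \in \M(p_\rho)$.

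The only subtle point—and the main conceptual obstacle—is recognizing that extendability forces the \emph{equality} $\sum_i v_i(A_{\sigma(i)}) = \sum_i v_i(A_i)$ rather than mere inequality, which makes the algorithm's weak inequality check both sharp and sufficient. Once that is clear, the fact that the witnessing pair $(k,\ell) = (\kappa,\sigma(\kappa))$ lives inside the restricted search space, together with the optimality of the max-weight matching on the bipartite graph $H$, yields the precise correspondence in both directions.
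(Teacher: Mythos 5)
Your proof is correct and follows essentially the same route as the paper's: soundness of a returned pair via the social-welfare equality and Lemma \ref{lem:reshuffle}, and completeness by noting that the witnessing pair $(\kappa,\sigma(\kappa))$ is enumerated and that the restriction of $\sigma$ is a feasible matching in $H$, so the max-weight matching $\rho$ passes the welfare check. The only cosmetic difference is that you make the ``envy-freeable iff equal social welfare'' equivalence explicit up front, whereas the paper invokes Theorem \ref{thm:hs_ef}(ii) in place; the substance is identical.
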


\subsection{Non-Extendable Solutions}\label{sec:non-extend}
This section addresses the case wherein the maintained allocation is not extendable. The subroutine \textsc{FindSink} takes as input an envy-free solution $(\A,p)$---with subsidy vector $p \in \{0,1\}^n$---and a good $g$. The subroutine tentatively assigns the good to one agent at a time and checks if the subsidies remain in $\{0,1\}$. We will show that if the input $(\A,p)$ is not extendable, then \textsc{FindSink} necessarily succeeds in finding such an agent (Lemmas \ref{lem:fs_terminate} and \ref{lem:fs_correct}).   

\floatname{algorithm}{Algorithm}
\begin{algorithm}[h!]
	\caption{\textsc{FindSink}}\label{Alg:FindSink}  
	\textbf{Input:} Instance $\instance{[n], [m], \val}$, an envy-free solution $(\A,p)$, and a good $g\in[m]$. \\
	\textbf{Output:} An agent $s \in [n]$.
	\begin{algorithmic}[1]
		\STATE Select an arbitrary agent $s \in \M(p)$ and set allocation $\mathcal{X} = (A_1, A_2, \ldots, A_{s} \cup \{g\}, \ldots, A_n)$. \label{line:find-sink-initialize} \\ \COMMENT{We will show that---for non-extendable inputs---allocation $\mathcal{X}$ is envy-freeable.} 
		\STATE Compute the subsidy vector $(\varphi_1, \ldots, \varphi_n)$ for allocation $\mathcal{X}$. \COMMENT{See Theorem~\ref{thm:hs_sub}.}
		\WHILE{there exists an agent $j$ with subsidy $\varphi_j\geq 2$}\label{loop_FS} 
				\STATE Update $s \leftarrow j$ \label{agent_j} and update $\mathcal{X} = (A_1, A_2, \ldots, A_{s} \cup \{g\},\ldots, A_n)$. \label{line:new-candidate}
				\STATE Compute the subsidy vector $(\varphi_1, \ldots, \varphi_n)$ for allocation $\mathcal{X}$.
		\ENDWHILE
		\RETURN agent $s$.
	\end{algorithmic}
\end{algorithm}

The following proposition shows that all the agents $s$ considered by \textsc{FindSink} are among the most subsidized ones, under $p$, and the considered allocations $\mathcal{X} \coloneqq (A_1,\ldots,A_s\cup\{g\},\ldots,A_n)$  are envy-freeable. Note that, throughout the execution of the subroutine, $(\A,p)$ remains unchanged.

\begin{restatable}{lemma}{LemmaFindSinkInduct}
\label{lem:s}
In a fair division instance with dichotomous valuations, let $(\A,p) $ be an envy-free solution with subsidies $p \in \{0,1\}^n$. If input $(\A,p)$ is not extendable with the given good $g$, then each agent $s$ considered in \textsc{FindSink} (Line \ref{line:new-candidate}) belongs to the set $\M(p)$  and each considered allocation $\mathcal{X} \coloneqq (A_1,\ldots,A_s\cup\{g\},\ldots,A_n)$  is envy-freeable.
\end{restatable}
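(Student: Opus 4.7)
The plan is to induct on the number of times the while loop in \textsc{FindSink} executes, treating the initial selection in Line \ref{line:find-sink-initialize} as the base case.

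\emph{Base case.} The initial $s$ lies in $\M(p)$ by construction, so only the envy-freeability of $\mathcal{X}=(A_1,\ldots,A_s\cup\{g\},\ldots,A_n)$ remains to be shown. I would argue by contradiction: if $\mathcal{X}$ is not envy-freeable, Theorem \ref{thm:hs_ef} yields a permutation $\pi$ with $\sum_i v_i(X_{\pi(i)}) > \sum_i v_i(X_i)$. Writing this inequality in terms of $\A$, the key algebraic step is to expand it as
\[\sum_i v_i(X_{\pi(i)}) - \sum_i v_i(X_i) \;=\; \Bigl(\sum_i v_i(A_{\pi(i)}) - \sum_i v_i(A_i)\Bigr) + (\Delta_k - \Delta_s),\]
where $k=\pi^{-1}(s)$ and $\Delta_i := v_i(A_s\cup\{g\})-v_i(A_s) \in \{0,1\}$. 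Since $\A$ is envy-freeable, the first bracket is non-positive, so positivity forces $\Delta_k = 1$ together with $\sum_i v_i(A_{\pi(i)}) = \sum_i v_i(A_i)$. The latter equality makes $\A_\pi$ also welfare-maximizing, hence envy-freeable (Theorem \ref{thm:hs_ef}), and by Lemma \ref{lem:reshuffle} the solution $(\A_\pi, p_\pi)$ is envy-free. Since $s \in \M(p)$, the entry $(p_\pi)_k = p_s$ is the maximum of $p_\pi$, so $k \in \M(p_\pi)$; together with $\Delta_k=1$ this exhibits $(\pi, k)$ as a witness to the extendability of $(\A, p)$ with good $g$, contradicting the hypothesis. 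Therefore $\mathcal{X}$ is envy-freeable.

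\emph{Inductive step.} Suppose at some iteration we have $s \in \M(p)$ and envy-freeable $\mathcal{X}$, and the loop proceeds because some agent $j$ satisfies $\varphi_j \geq 2$. I first show $j \in \M(p)$. Let $P$ be a path in $G_\mathcal{X}$ starting at $j$ with $w_\mathcal{X}(P) \geq 2$. Proposition \ref{lem:weight_increase} tells us that $G_\mathcal{X}$ and $G_\A$ differ only on edges incident on $s$: outgoing edges from $s$ have weakly smaller weight in $G_\mathcal{X}$, while incoming edges to $s$ have weight at most one larger. Because $P$ is simple, $s$ is the target of at most one edge of $P$, hence $w_\A(P) \geq w_\mathcal{X}(P) - 1 \geq 1$. (The corner case $j=s$ is excluded: then every edge along $P$ satisfies $w_\mathcal{X} \leq w_\A$, forcing $\ell_\A(s) \geq 2$, which contradicts $p_s \leq 1$ via Theorem \ref{thm:hs_sub}.) Consequently $\ell_\A(j) \geq 1$, so by Theorem \ref{thm:hs_sub} we have $p_j \geq 1$, and as $p \in \{0,1\}^n$ this gives $p_j=1$ and $j \in \M(p)$. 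The base-case argument applied with $j$ playing the role of $s$ then shows that the updated allocation $(A_1,\ldots,A_j\cup\{g\},\ldots,A_n)$ is envy-freeable, completing the induction.

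The main obstacle is the base case: one must extract from any hypothetical welfare-improving $\pi$ both a marginal-$1$ agent $k$ at bundle $A_s$ and an exact welfare equality $\sum_i v_i(A_{\pi(i)}) = \sum_i v_i(A_i)$ at the same time---both are needed to invoke Lemma \ref{lem:reshuffle} and certify extendability, and the dichotomous assumption (restricting $\Delta_i \in \{0,1\}$) is what forces both simultaneously. The inductive step is conceptually simpler once Proposition \ref{lem:weight_increase} is in hand: the path-weight bookkeeping reduces the new situation exactly to the base case.
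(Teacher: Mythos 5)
Your proposal is correct and follows essentially the same route as the paper's proof: induction over the iterations of \textsc{FindSink}, with the base case argued by contradiction through a social-welfare accounting that extracts both the tight equality $\sum_i v_i(A_{\pi(i)}) = \sum_i v_i(A_i)$ and a marginal-one agent $k$ at bundle $A_s$ (yielding an extendability witness), and the inductive step handled by the path-weight bookkeeping of Proposition~\ref{lem:weight_increase} together with Theorem~\ref{thm:hs_sub}. Your explicit $\Delta_k - \Delta_s$ decomposition and the treatment of the $j=s$ corner case are slightly more detailed write-ups of exactly the steps the paper performs.
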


\noindent
The proof of the Lemma is deferred to Appendix \ref{appendix:prop-non-ext}.

The following lemma is a key technical result for $\textsc{FindSink}$. It shows that the subroutine terminates in polynomial time, when the input $(\mathcal{A},p)$ is not extendable. 

\begin{lemma}\label{lem:fs_terminate}
In a fair division instance with dichotomous valuations, let $(\A,p) $ be an envy-free solution with subsidies $p \in \{0,1\}^n$. Also, assume that $(\A,p)$ is not extendable with good $g$. Then, given $(\A,p)$ and $g$ as input, the subroutine \textsc{FindSink} terminates in polynomial time.
\end{lemma}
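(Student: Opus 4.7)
The plan is to establish that each iteration of \textsc{FindSink}'s while loop costs polynomial time and that the total number of iterations is polynomially bounded, which together give polynomial-time termination.

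For the per-iteration cost, I would apply Lemma~\ref{lem:s}: every candidate $s$ considered lies in $\M(p)$, and the corresponding allocation $\mathcal{X} = (A_1,\ldots,A_s\cup\{g\},\ldots,A_n)$ is envy-freeable. By Theorem~\ref{thm:hs_sub}, the subsidy vector $(\varphi_1,\ldots,\varphi_n)$ is then well-defined and computable in polynomial time (e.g., via Floyd--Warshall after negating edge weights).

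Next I would use non-extendability to control how the envy graph changes. Since $(\A,p)$ is not extendable with $g$, the identity permutation fails Definition~\ref{def:extend}, which forces $v_\kappa(A_\kappa\cup\{g\})=v_\kappa(A_\kappa)$ for every $\kappa\in\M(p)$ -- in particular for each candidate $s$. Combining this with Proposition~\ref{lem:weight_increase}, one sees that in $G_\mathcal{X}$ the outgoing edges from $s$ coincide with those in $G_\A$, the incoming edges to $s$ grow by at most $1$, and all other edges are unchanged. Since $p\in\{0,1\}^n$ already bounds every longest-path weight in $G_\A$ by $1$, it follows that $\varphi_i\in\{0,1,2\}$ at every iteration, and in particular $\varphi_{s}\leq p_s\leq 1$ because the outgoing edges from $s$ are frozen. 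Consequently, any $j$ with $\varphi_j\geq 2$ satisfies $p_j=1$ (hence $j\in\M(p)$) and witnesses a path from $j$ that necessarily traverses a uniquely-augmented incoming edge into $s$, revealing some agent $i$ on this path with $v_i(A_s\cup\{g\})-v_i(A_s)=1$.

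The crux is to exhibit a potential function that strictly improves with each update $s\leftarrow j$. My plan is to use the lexicographically-sorted decreasing tuple $(\varphi_{(1)},\ldots,\varphi_{(n)})$, and argue that promoting $j$ (with $\varphi_j\geq 2$) to be the new $s$ strictly reduces this tuple. The driving observation is that the unique source of the value-$2$ witness under $\mathcal{X}_t$ -- namely the augmentation on incoming edges to $s_t$ -- is destroyed when $g$ is re-placed at $j$, while the replacement augmentation on incoming edges into $j$ cannot recreate a $2$-weight path sourced at the same vertex, because outgoing edges from every visited candidate remain pinned to their $G_\A$-weights throughout. Since the tuple lives in $\{0,1,2\}^n$ and strictly decreases, the loop performs $O(n)$ iterations.

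The hard part is precisely this last potential argument. The difficulty is ruling out cycling: a priori an agent $s_t$ could re-qualify as a future candidate if some later placement inflates a path out of $s_t$ via a $+1$ on an incoming edge to a subsequent $s_{t'}$. The argument must exploit the integrality afforded by dichotomous marginals together with the frozen-outgoing-edges property, likely through a careful case analysis of where the witness path enters and exits the modified vertex, to show that lexicographic progress is genuinely strict and therefore the sequence of iterations cannot be longer than the number of distinct sorted subsidy profiles, which is polynomial in $n$.
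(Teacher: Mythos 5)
Your preliminary observations are correct and match steps the paper also needs: each iteration costs polynomial time (Lemma~\ref{lem:s} plus Theorem~\ref{thm:hs_sub}), non-extendability under the identity permutation freezes the outgoing edges of every candidate $s\in\M(p)$, all subsidies stay in $\{0,1,2\}$, and the currently selected agent's own subsidy is at most $1$. But the proof has a genuine gap exactly where you flag ``the hard part'': the claimed lexicographic decrease of the sorted subsidy tuple is never established, and there is no apparent reason it holds. When $g$ is moved from $s_t$ to $j$, the agent $j$'s subsidy indeed drops to at most $1$, but the augmentation of the incoming edges of $j$ can create \emph{new} agents with subsidy $2$ --- possibly more of them than before --- in which case the sorted tuple increases lexicographically. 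Your ``frozen outgoing edges'' observation does not rule out re-qualification of a previously visited candidate $s_{t'}$ either: in a later allocation $\mathcal{X}^{t''}$ all edges incident on $s_{t'}$ are back at their $G_\A$-weights, so a path from $s_{t'}$ through an augmented incoming edge of $s_{t''}$ can again reach weight $2$. Ruling this out is the entire content of the lemma, and a local, per-iteration potential argument does not obviously suffice.

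The paper's proof is global rather than local. It supposes some agent is selected twice, so the selection sequence $s^0,\dots,s^T$ returns to $s^0$. For each consecutive pair it extracts a path $P^\tau$ from $s^\tau$ to $s^{\tau-1}$ whose weight in $G_\A$ is nonnegative (Claim~\ref{claim:path-nonzero}), concatenates these into a closed walk of nonnegative $G_\A$-weight, and decomposes that walk into cycles; since $G_\A$ has no positive-weight cycle (Theorem~\ref{thm:hs_ef}), every cycle in the decomposition has weight exactly $0$. The cycle containing the critical edge $(k,s^0)$ with $v_k(A_{s^0}\cup\{g\})-v_k(A_{s^0})=1$ then yields a welfare-preserving permutation $\widehat{\sigma}$, and $(\widehat{\sigma},k)$ satisfies Definition~\ref{def:extend}, contradicting non-extendability. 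This shows each agent is selected at most once, giving at most $n$ iterations. If you want to salvage your plan, you would essentially have to reprove this cycle argument in disguise; as written, the termination claim is unsupported.
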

\begin{proof}
We will prove that no agent $j\in [n]$ is selected more than once in the while-loop (Line \ref{loop_FS}) of \textsc{FindSink}. This will imply that the subroutine iterates at most $n$ times and, hence, will help establish the subroutine's time complexity. 

Assume, for a contradiction, that agent $j \in [n]$ is selected more than once. In particular, let $s^0, s^1, \ldots, s^T$ denote a sequence of agents selected (one after the other) in \textsc{FindSink} such that $s^0 = s^T = j$. We will first show that, for each $\tau \in [T]$, there exists a path $P^\tau$ between $s^\tau$ and $s^{\tau-1}$ with weight $w_\A (P^\tau)  \geq 0$. Note that the weight considered here is in the envy graph $G_\A$ and, throughout the execution of \textsc{FindSink}, the allocation $\A$ remains unchanged. 

For each $\tau \in [T]$, write allocation $\mathcal{X}^\tau \coloneqq (A_1, \ldots, A_{s^\tau} \cup \{g\}, \ldots, A_n)$. Note that the agent $s^\tau$ was selected right after $s^{\tau-1}$ and, hence, agent $s^\tau$ must have received a subsidy of at least $2$ under the subsidy vector computed for allocation $\mathcal{X}^{\tau-1}$. Therefore, in the graph $G_{\mathcal{X}^{\tau-1}}$ and starting at $s^\tau$, there exists a path ${Q}^\tau$ with weight 
\begin{align}
w_{\mathcal{X}^{\tau -1}} ({Q}^\tau) \geq 2 \label{ineq:defQ}
\end{align}
Here, we can apply Theorem \ref{thm:hs_sub}, since allocation $\mathcal{X}^{\tau-1}$ is envy-freeable (Lemma \ref{lem:s}). Also, the path ${Q}^\tau$ must pass through $s^{\tau-1}$, otherwise its weight remains the same as in $G_\A$ (Proposition \ref{lem:weight_increase}), i.e., $w_{\mathcal{X}^{\tau-1}} ({Q}^\tau)\leq 1$ (which would contradict (\ref{ineq:defQ})). We will set $P^\tau$ to be the subpath---of ${Q}^\tau$---that goes from $s^\tau$ to $s^{\tau-1}$. The claim below lower bounds the weight of $P^\tau$ in the envy graph $G_\A$. 

\begin{claim} \label{claim:path-nonzero}
$w_\A(P^\tau) \geq 0$ for all $1 \leq \tau \leq T$. 
\end{claim}
\begin{proof}
The weight of the path $Q^\tau$ in the graph $G_\A$ satisfies 
\begin{align}
w_\A(Q^\tau) \leq w_\A(P^\tau) +1 \label{ineq:QP}
\end{align}
This inequality follows from the observation that the subpath of $Q^\tau$ that appears after $P^\tau$ (i.e., the subpath of $Q^\tau$ that starts at $s^{\tau-1}$) is itself of weight at most one in the graph $G_\A$ (Theorem \ref{thm:hs_sub}); recall that $p \in \{0,1\}^n$. Furthermore, applying Proposition~\ref{lem:weight_increase} (with $\mathcal{Y} = \mathcal{A}$, $\mathcal{Z} = \mathcal{X}^{\tau-1}$, and agent $x = s^{\tau-1}$), we get $w_\A(Q^\tau) \geq w_{\mathcal{X}^{\tau-1}} (Q^\tau) - 1 \geq 1$; the last inequality follows from equation (\ref{ineq:defQ}). Therefore, inequality (\ref{ineq:QP}) reduces to $w_\A(P^\tau) \geq 0$. The claim stands proved. 
\end{proof}

Building upon Claim \ref{claim:path-nonzero}, we will next show that the collection of paths $P^\tau$s can be covered by a family of zero-weight cycles. 
Recall that $s^0, s^1, \ldots, s^T$ is the sequence of agents under consideration with $s^0 = s^T$. Also, for each $\tau \in [T]$, the path $P^\tau$ connects $s^\tau$ to $s^{\tau-1}$. Hence, the paths $P^\tau$s connected together form a closed directed walk, which we will refer to as $D$. These paths might have intersections, where an edge appears in multiple $P^\tau$s. For analysis, we replace these edges with multi-edges such that each (edge) copy appears exactly once in $D$. By Claim \ref{claim:path-nonzero}, we have $w_\A(P^\tau) \geq 0$, for each $\tau \in [T]$. Therefore, the weight of the directed walk $D$ (in $G_\A$) satisfies $w_{\A}(D) = \sum_{\tau =1}^T w_{\A}(P^\tau) \geq 0$. Also, by construction, in the directed walk $D$, the in-degree of each vertex is equal to its out-degree. Hence, $D$ can be decomposed into a family $\mathcal{K}$ of cycles. The family $\mathcal{K}$ satisfies 
\begin{align} 
\sum_{K \in \mathcal{K}} w_\A(K) = w_\A(D) \geq 0 \label{ineq:KD}
\end{align}
Here, $w_\A(K)$ denotes the weight of the cycle $K$ in $G_\A$. Recall that the allocation $\A$ is envy-freeable and, hence $G_\A$ does not contain any positive-weight cycle (Theorem~\ref{thm:hs_ef}). Using this observation and inequality (\ref{ineq:KD}) we get that 
\begin{align} 
w_\A(K) = 0 \qquad \text{for each cycle $K \in \mathcal{K}$} \label{eq:cyclefamily}
\end{align} 
Hence, the collection of paths $P^\tau$s can be covered by a family of zero-weight cycles $K \in \mathcal{K}$. 

Next, we will complete the proof of the lemma by deriving a contradiction that $(\A, p)$ is extendable. Towards this, we establish existence of a relevant cycle $\widehat{K} \in \mathcal{K}$.  
\begin{claim} \label{claim:cycle-zero}
There exists a cycle $\widehat{K} \in \mathcal{K}$ containing an edge $(k,s^0)$ such that $v_k(A_{s^0} \cup \{ g \}) - v_k(A_{s^0}) = 1$.  
\end{claim}
\begin{proof}
Consider the path $P^1$ (i.e., consider $\tau =1$) that connects $s^1$ to $s^0$. Let $(k,s^0)$ be the last edge in $P^1$. Since $P^1$ is a subpath of $Q^1$, the edge $(k,s^0)$ is contained in $Q^1$ as well. Furthermore, inequality (\ref{ineq:defQ}) with $\tau = 1$ gives us $w_{\mathcal{X}^0} (Q^1) \geq 2$. Also, using the fact that $p \in \{0,1\}^n$ and Theorem  \ref{thm:hs_sub} we have $w_\A(Q^1) \leq 1$. The increase in the weight of the path $Q^1$, as we move from $G_\A$ to $G_{\mathcal{X}^0}$, can be accounted only by an increase in the weight of the edge $(k,s^0)$; see Proposition \ref{lem:weight_increase}. Therefore, we have $v_k(A_{s^0} \cup \{ g \}) - v_k(A_{s^0})  = w_{\mathcal{X}^0} (k,s^0) - w_\A(k,s^0) = 1$. 

By construction, the cycles in the family $\mathcal{K}$ cover the collection of paths $P^\tau$s. Hence, there exists a cycle $\widehat{K} \in \mathcal{K}$ that contains the edge $(k,s^0)$. The claim holds for this cycle $\widehat{K}$. 
\end{proof}

Using the cycle $\widehat{K}$ identified in Claim \ref{claim:cycle-zero}, we will define a permutation $\widehat{\sigma}$ which will lead to the desired contradiction. The permutation $\widehat{\sigma}$ will essentially reassign the bundles in the reverse order of the cycle $\widehat{K} = i_1 \rightarrow i_2 \rightarrow \ldots \rightarrow i_h \rightarrow i_1$:  
for all agents $x$ not in the cycle (i.e., $ x \notin \{i_1, \ldots, i_h\}$) set $\widehat{\sigma}(x) = x$.  For all agents $i_a$ in $\widehat{K}$, set $\widehat{\sigma}(i_a)$ to be her successor in the cycle, i.e., $\widehat{\sigma}(i_a) = i_{a+1}$ for $1 \leq a < h$ and $\widehat{\sigma}(i_h) = i_1$.  

The following claim shows that the allocation obtained by reassigning bundles of $\A$ according to $\widehat{\sigma}$ is envy-freeable. Also, note that the edge $(k, s^0)$ belongs to the cycle $\widehat{K}$ (Claim \ref{claim:cycle-zero}). Hence, $\widehat{\sigma}(k) = s^0$ and agent $k$ receives the bundle $A_{s^0}$ in the allocation $\mathcal{B} = \A_{\widehat{\sigma}}$. 

\begin{claim}
\label{claim:hat-envy}
The allocation $\mathcal{B} = (B_1, \ldots, B_n) \coloneqq \A_{\widehat{\sigma}}$ is envy-freeable. 
\end{claim}
\begin{proof}
We will show that the social welfare of $\mathcal{B}  = \A_{\widehat{\sigma}}$ is same as that of the envy-freeable allocation $\A$. Hence, the claim will follow, via Theorem \ref{thm:hs_ef}. Recall that cycle $\widehat{K} \in \mathcal{K}$ and, hence, the weight of $\widehat{K}$ in $G_\A$ is equal to zero, $w_\A(\widehat{K}) = 0$; see equation (\ref{eq:cyclefamily}). In particular, for $\widehat{K} = i_1 \rightarrow i_2 \rightarrow \ldots \rightarrow i_h \rightarrow i_1$ we have 
\begin{align}
\sum_{a=1}^{h-1} w_\A(i_a, i_{a+1}) + w_\A(i_h, i_1) = 0 \label{eq:cyclewts}
\end{align} 

Since $(B_1, \ldots, B_n) = \A_{\widehat{\sigma}}$, the construction of $\widehat{\sigma}$ gives us $B_{i_a} = A_{i_{a+1}}$ for all $1 \leq a < h$ and $B_{i_h} = A_{i_1}$. Using this indexing along with equation (\ref{eq:cyclewts}) and the definition of $w_\A( \cdot, \cdot)$, we obtain $\sum_{a=1}^h \left( v_{i_a} (B_{i_a}) - v_{i_a} (A_{i_a}) \right) =0$. That is, 
\begin{align}
\sum_{a=1}^h v_{i_a} (B_{i_a}) = \sum_{a=1}^h v_{i_a} (A_{i_a}) \label{eq:poorindex}
\end{align} 
Also, by construction, we have $B_x = A_x$ for all agents $x$ not in the cycle. That is, $v_x(B_x) = v_x(A_x)$ for all $ x \notin \{i_1, \ldots, i_h\}$. Adding these equalities with equation (\ref{eq:poorindex}), we get the desired bound on the social welfare of ${\mathcal{B}} = (B_1, \ldots, B_n)$:
\begin{align}
\sum_{i=1}^n v_i(B_i) = \sum_{i=1}^n v_i(A_i) \label{eq:swhat}
\end{align}  
Therefore, allocation $\mathcal{B}$ is envy-freeable (Theorem \ref{thm:hs_ef}). This completes the proof of the claim. 
\end{proof}

Finally, we will derive the desired contradiction that $(\A,p)$ is extendable with good $g$: 
\begin{itemize}
\item[(i)] For the permutation $\widehat{\sigma}$ defined above, the allocation $\mathcal{B} = \A_{\widehat{\sigma}}$ is envy-freeable (Claim \ref{claim:hat-envy}). Hence, Lemma \ref{lem:reshuffle} implies that subsidy vector $q \coloneqq p_{\widehat{\sigma}}$ realizes an envy-free solution $(\mathcal{B}, q)$.  
\item[(ii)] As mentioned previously, for the agent $k$ identified in Claim \ref{claim:cycle-zero} we have $\widehat{\sigma}(k) =s^0$. Therefore, in allocation $\mathcal{B} = \A_{\widehat{\sigma}}$, agent $k$'s bundle $B_k = A_{s^0}$. We can, hence, express the equality from Claim \ref{claim:cycle-zero} as $v_k(B_k \cup \{g\}) - v_k(B_k) = 1$. Also, since $s^0 \in \M(p)$ (Lemma \ref{lem:s}), we have $k \in \M(q)$. 
\end{itemize}
Therefore, permutation $\widehat{\sigma}$ and agent $k$ satisfy extendability criteria (Definition \ref{def:extend}). This, however, contradicts the lemma assumption that the input $(\A,p)$ is not extendable. Therefore, the subroutine \textsc{FindSink} selects any agent $s^0$ at most once and it terminates in polynomial time. The lemma stands proved. 
\end{proof}

The subroutine \textsc{FindSink} is designed to terminate only when it has identified an agent $s$ to whom we can assign the good $g$ and maintain the subsidies to be in $\{0,1\}$. The next lemma formalizes this observation.  

\begin{lemma}\label{lem:fs_correct}
In a fair division instance with dichotomous valuations, let $(\A,p) $ be an envy-free solution with $p \in \{0,1\}^n$. Also, assume that $(\A,p)$ is not extendable with good $g$ and let $s \in [n]$ be the agent returned by \textsc{FindSink} (given $(\A,p)$ and $g$ as input). Then, the allocation $(A_1, \ldots, A_s \cup \{g \}, \ldots, A_n)$ is envy-freeable with a subsidy of either $0$ or $1$ for each agent. 
\end{lemma}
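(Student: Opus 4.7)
The plan is to chain together the properties already established for \textsc{FindSink}. There are really two things to verify: that the final allocation $\mathcal{X} \coloneqq (A_1, \ldots, A_s \cup \{g\}, \ldots, A_n)$ is envy-freeable, and that the associated subsidies lie in $\{0,1\}$.

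For envy-freeability, I would simply invoke Lemma~\ref{lem:s}: under the hypothesis that $(\A,p)$ is not extendable with $g$, every agent $s$ that \textsc{FindSink} ever considers (both the initial choice in Line~\ref{line:find-sink-initialize} and each update in Line~\ref{line:new-candidate}) belongs to $\M(p)$, and the corresponding allocation $(A_1, \ldots, A_s \cup \{g\}, \ldots, A_n)$ is envy-freeable. In particular, this applies to the agent $s$ that the subroutine ultimately returns. Moreover, Lemma~\ref{lem:fs_terminate} guarantees that the subroutine does terminate (in polynomial time), so a return value $s$ is in fact produced.

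For the subsidy bound, I would appeal to Theorem~\ref{thm:hs_sub} applied to the envy-freeable allocation $\mathcal{X}$. The minimum required subsidy for agent $i$ equals $\ell_{\mathcal{X}}(i)$, the maximum weight of any path in $G_{\mathcal{X}}$ starting at $i$ (with the convention that the trivial length-zero path has weight $0$, so $\ell_{\mathcal{X}}(i) \ge 0$). Since the valuations are dichotomous and $v_i(\emptyset)=0$, each $v_i(\cdot)$ is integer-valued, so every edge weight $w_{\mathcal{X}}(i,j) = v_i(X_j) - v_i(X_i)$ is an integer, and hence every $\ell_{\mathcal{X}}(i)$ is a non-negative integer. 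Crucially, the \textbf{while}-loop of \textsc{FindSink} (Line~\ref{loop_FS}) terminates exactly when no agent has subsidy at least $2$ under $\mathcal{X}$, which means $\ell_{\mathcal{X}}(j) \le 1$ for every agent $j$. Combined with integrality and non-negativity, every $\ell_{\mathcal{X}}(j) \in \{0,1\}$.

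There is essentially no obstacle here beyond noticing that the two previous lemmas together cover everything: Lemma~\ref{lem:s} handles envy-freeability of $\mathcal{X}$, and the explicit termination condition of the \textbf{while}-loop, together with integrality of edge weights under dichotomous valuations, forces the subsidy vector produced by Theorem~\ref{thm:hs_sub} to be $\{0,1\}$-valued. The real work has already been done in Lemmas~\ref{lem:s} and~\ref{lem:fs_terminate}; the current lemma is the clean corollary that wraps up the non-extendable case of the main algorithm.
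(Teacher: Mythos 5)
Your proof is correct and follows essentially the same route as the paper's: envy-freeability of the final allocation comes from Lemma~\ref{lem:s}, and the $\{0,1\}$ bound on subsidies follows from the while-loop's termination condition together with the fact that the minimal subsidies of Theorem~\ref{thm:hs_sub} are nonnegative integers under dichotomous valuations. Your additional remark that Lemma~\ref{lem:fs_terminate} guarantees a return value exists is a fine (implicit in the paper) touch but does not change the argument.
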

\begin{proof}
Since $s$ is one of the agents selected in \textsc{FindSink}, Lemma \ref{lem:s} implies that the allocation $(A_1, \ldots, A_s \cup \{g \}, \ldots, A_n)$ is envy-freeable. Furthermore, the termination condition of the while-loop in \textsc{FindSink} (Line \ref{loop_FS}) ensures that, for the allocation $(A_1, \ldots, A_s \cup \{g \}, \ldots, A_n)$, the required subsidies are less than $2$. Since the subsidies are nonnegative integers (for dichotomous valuations), they are either $0$ or $1$. This completes the proof. 
\end{proof}

\subsection{Proof of Theorem \ref{thm:main}}

In this section, we will prove Theorem~\ref{thm:main} by establishing that the solution returned by $\textsc{Alg}$ (Algorithm \ref{Alg:BinSubsidy}) is envy-free and requires, for each agent, a subsidy of either $0$ or $1$. 
\mainthm*
\begin{proof}
We will show that, with iteration counter $t$, the algorithm inductively maintains an envy-free solution $(\A^t, p^t)$ wherein $\A^t=(A^t_1, \ldots A^t_n)$ is a (partial) envy-freeable allocation and the subsidy vector $p^t \in \{0,1\}^n$. The initialization in Line \ref{step:initialize} of \textsc{Alg} ensures that this property holds for $t=0$, since the empty allocation $\A^0$ is envy-freeable with zero subsidies. 

Next, consider any iteration $t\geq 1$ of the while-loop (Line \ref{goods_remain}) in \textsc{Alg} and let $g$ be the good selected in Line \ref{pick_good}. For the current solution $(\A^t, p^t)$, there are two complementary and exhaustive cases: \\

\noindent 
{\it Case {\rm I}: $(\A^t, p^t)$ is extendable with good $g$.} Here, Lemma~\ref{lem:extend_correct} implies that the permutation $\sigma$ and the agent $\kappa$ returned by the \textsc{Extend} subroutine satisfies the extendability criteria. Hence, by Lemma~\ref{lem:extendabillity}, we get that assigning $g$ to agent $\kappa$'s bundle---as in Lines \ref{t+1_extend} and \ref{line:update-extend}---leads to an envy-free solution $(\A^{t+1},p^{t+1})$ with $p^{t+1}\in \{0,1\}^n$. 

\noindent 
{\it Case {\rm II}: $(\A^t, p^t)$ is not extendable with good $g$.} This case will be correctly identified by \textsc{Alg} (Lemma~\ref{lem:extend_correct}) and the algorithm will then invoke the \textsc{FindSink} subroutine (Line \ref{line:find-sink}). The subroutine in turn will find (in polynomial time) an agent $s \in [n]$ (Lemma \ref{lem:fs_terminate}) such that assigning the good $g$ to agent $s$---as in Line \ref{line:sink-assign}---provides an envy-free solution $(\A^{t+1},p^{t+1})$ with $p^{t+1}\in \{0,1\}^n$ (Lemma~\ref{lem:fs_correct}).

Therefore, in both cases, the updated solution $(\A^{t+1}, p^{t+1})$ is envy-free with $p^{t+1} \in \{0,1\}^n$. This, overall, shows that the returned solution $(\A,p)$ satisfies the desired properties. 

Furthermore, note that the subroutines \textsc{Extend} and \textsc{FindSink} run in polynomial time, and only require value queries. This establishes the efficiency of \textsc{Alg} and completes the proof.  
\end{proof}

\section{Conclusion and Future Work}
We prove that, under dichotomous valuations, envy-freeness can always be achieved with a subsidy of at most $1$ per agent. This bound is tight and our proof is constructive. Specifically, our algorithm assigns the goods iteratively while maintaining envy-freeness, with bounded subsidies. Even though, at a high level, our algorithm might seem like a refinement of the method used for finding $\EFone$ allocations \cite{LMMS042}, the two approaches are distinct. In particular, the current algorithm resolves positive-weight cycles (in the envy graph) and, by contrast, finding $\EFone$ allocations (under monotone valuations) entails resolution of top trading cycles \cite{LMMS042}. We can, in fact, construct an instance wherein a specific execution of the developed algorithm returns an allocation that is not $\EFone$; see Appendix \ref{appendix:efone}. Hence, extending the current work to additionally obtain the $\EFone$ guarantee is a relevant direction of future work. Another interesting direction would be to obtain tight subsidy bounds for general monotone valuations.   

\bibliographystyle{alpha} 
\bibliography{references.bib}

\appendix

\section{Proof of Lemma \ref{lem:reshuffle}}
\label{appendix:lemma-shuffle}
This section restates and proves Lemma \ref{lem:reshuffle}. 

\LemmaShuffle*
\begin{proof}
We will prove that allocation $\mathcal{B} = (B_1, \ldots, B_n) = \A_\sigma$ satisfies $v_i(B_i)+p_{\sigma(i)}\geq v_i(B_j)+p_{\sigma(j)}$, for all agents $i, j \in [n]$. This will establish the lemma. Given that $(\A,p)$ is an envy-free solution, we have, for each agent $i \in [n]$:
\begin{align}\label{eq_1}
v_i(A_i)+p_i & \geq  v_i(A_{\sigma(i)})+p_{\sigma(i)}
\end{align}
Summing (\ref{eq_1}) over all $i\in [n]$ gives us 
\begin{align}\label{eq_2}
		\sum_{i=1}^n v_i(A_i) + \sum_{i=1}^n p_i & \geq  \sum_{i=1}^n v_i(A_{\sigma(i)}) + \sum_{i=1}^n p_{\sigma(i)} 
\end{align}
Furthermore, the lemma assumption ensures that permutation $\sigma$ maintains envy-freeability, i.e., both allocations $\A$ and $\mathcal{B} = \A_\sigma$ are envy-freeable. Hence, using condition (ii) of Theorem \ref{thm:hs_ef}, we obtain $\sum_{i=1}^n v_i(A_i) = \sum_{i=1}^n v_i(A_{\sigma(i)})$. In addition, $\sum_{i=1}^n p_i=\sum_{i=1}^n p_{\sigma(i)}$; recall that $\sigma$ is a permutation. These observations imply that inequality (\ref{eq_2}) in fact holds with an equality. Therefore, for any agent $i \in [n]$, inequality (\ref{eq_1}) cannot be strict. That is, the following equality holds for every agent $i \in [n]$:
\begin{align}
\label{eq_3}
v_i(A_i)+p_i =  v_i(A_{\sigma(i)})+p_{\sigma(i)}
\end{align}

Now, to show that $(\mathcal{B}, p_\sigma)$ is envy-free, consider any agents $i, j \in [n]$ and note that the definition $\mathcal{B}$ gives us
\begin{align*}
v_i(B_i)+p_{\sigma(i)}&=v_i(A_{\sigma(i)})+p_{\sigma(i)} \\
& = v_i(A_i)+p_{i} \tag{via equation (\ref{eq_3})} \\
& \geq v_i(A_{\sigma(j)})+p_{\sigma(j)} \tag{since $(\A, p)$ is envy-free} \\
& =v_i(B_j)+p_{\sigma(j)} \tag{since $\mathcal{B} = \A_\sigma$}  
\end{align*}
This establishes the envy-freeness of $(\mathcal{B}, p_\sigma)$ and completes the proof. 
\end{proof}

\section{Missing Proofs from Section \ref{section:main-algorithm}}
\label{appendix:prop-proofs}
Here, we restate and prove Propositions \ref{lem:sw_increase} and \ref{lem:weight_increase}.

\PropSW*
\begin{proof}  We will show that allocation $\mathcal{Z} \coloneqq (Y_1,\ldots, Y_x\cup \{g\}, \ldots, Y_n)$ satisfies condition $(ii)$ of Theorem \ref{thm:hs_ef} and, hence, is envy-freeable. Towards this, consider any permutation $\sigma$ over $[n]$ and write allocation $\widehat{\mathcal{Z}} \coloneqq \mathcal{Z}_\sigma$. Allocation $\widehat{\mathcal{Z}}$ is obtained by reassigning the bundles in $\mathcal{Z}=(Y_1,\ldots, Y_x\cup \{g\}, \ldots, Y_n)$. Hence, there exists an agent $a$ whose bundle $\widehat{Z}_a = Y_x \cup \{g\}$. All the other bundles in $\widehat{\mathcal{Z}}$ are identical to ones in the allocation $\mathcal{Y}$. Therefore, $\mathcal{Y}_\sigma = (\widehat{Z}_1, \ldots, \widehat{Z}_a\setminus \{g\}, \ldots, \widehat{Z}_n)$. Given that $\mathcal{Y}$ is envy-freeable, it achieves the maximum possible social welfare across all possible reassignments of its bundles (Theorem \ref{thm:hs_ef}). Hence, $ \sum_{i \neq a} v_i(\widehat{Z}_i) \ + v_a(\widehat{Z}_a \setminus \{ g \}) \leq \sum_{i=1}^n v_i(Y_i)$. This inequality along with the fact that the agents' valuations are dichotomous gives us 
\begin{align}
\sum_{i=1}^n v_i(\widehat{Z}_i) \leq \sum_{i=1}^n v_i(Y_i) + 1 \label{ineq:ztoy}
\end{align} 

Furthermore, via the proposition assumption, we have $v_x(Y_x \cup \{g \}) - v_x(Y_x) = 1$, for agent $x$ and good $g$. Therefore, the social welfare of allocation $\mathcal{Z}=(Y_1,\ldots, Y_x\cup \{g\}, \ldots, Y_n)$ satisfies $\sum_{i=1}^n v_i(Z_i) = \sum_{i=1}^n v_i(Y_i) + 1$. Using this equation and inequality (\ref{ineq:ztoy}), we obtain $\sum_{i=1}^n v_i(Z_i) \geq \sum_{i=1}^n v_i(\widehat{Z}_i)$. This, overall, shows that the the social welfare of $\mathcal{Z}$ is at least as high as that $\mathcal{Z}_\sigma$, for any permutation $\sigma$.  Therefore, allocation $\mathcal{Z}$ is envy-freeable (Theorem \ref{thm:hs_ef}). This completes the proof. 
\end{proof}

\PropWeight*
\begin{proof}
For each agent $i \neq x$, the bundle agent $i$ receives in allocation $\mathcal{Z} = (Y_1, \ldots, Y_x\cup \{g\},\ldots,Y_n)$ is same as her bundle in allocation $\mathcal{Y}=(Y_1, \ldots, Y_n)$. Therefore, in the envy graph $G_{\mathcal{Z}}$, the weight of any edge $(i,j)$---with $i, j \in [n] \setminus \{ x\}$---is the same as its weight in $G_\mathcal{Y}$; specifically, $w_\mathcal{Z}(i,j) = v_i(Z_j) - v_i(Z_i) = v_i(Y_j) - v_i(Y_i) = w_\mathcal{Y}(i,j)$.  

Furthermore, the weights of the edges going out of $x$ do not increase. In particular, for any edge $(x, j)$ we have $w_\mathcal{Z}(x, j) = v_x(Z_j) - v_x(Z_x) = v_x(Y_j) - v_x(Y_x \cup \{g\}) \leq v_x(Y_j) - v_x(Y_x) = w_\mathcal{Y}(x, j)$; the last inequality follows from the fact that the valuation $v_x$ is monotonic. 

Finally, we note that the weights of the incoming edges $(i,x)$ increase by at most one: $w_\mathcal{Z}(i, x) = v_i(Z_x) - v_i(Z_i) = v_i(Y_x \cup \{g\}) - v_i(Y_i) \leq v_i(Y_x) + 1 -  v_i(Y_i) = w_\mathcal{Y}(i,x) + 1$. Here, for the last inequality we use the fact that $v_i$ is dichotomous. 
\end{proof}

\subsection{Proof of Lemma \ref{lem:extend_correct}}
\label{appendix:extend}

We restate and prove Lemma \ref{lem:extend_correct} in this section. 

\LemmaExtendWorks*
\begin{proof} 
We first show that if the subroutine returns a permutation $\rho$ and an agent $k \in [n]$ (in Line \ref{line:return-ex}), then indeed the given solution $(\A, p)$ is extendable with good $g$. Write allocation $\mathcal{B} \coloneqq \A_\rho$. The successful execution of the if-condition in Line \ref{line:return-ex} implies that (analogous to the envy-freeable allocation $\A$) allocation $\mathcal{B}$ maximizes the social welfare across all reassignments of its bundles among the agents (Theorem \ref{thm:hs_ef}). Hence, $\mathcal{B}$ is envy-freeable as well. Furthermore, with subsidy vector $q \coloneqq p_\rho$ we obtain an envy-free solution $(\mathcal{B},q)$; see Lemma \ref{lem:reshuffle}. Therefore, condition $(i)$ of Definition \ref{def:extend} holds for the returned permutation $\rho$. In addition, for the returned agent $k$, we have $\rho(k) = \ell \in \M(p)$ and $v_k(A_\ell \cup \{g \}) - v_k(A_\ell) =1$; see Line \ref{line:complete-rho} and Line \ref{line:select-kl}. Since $\mathcal{B} = (B_1, \ldots, B_n) = \A_\rho$ and $q = p_\rho$, we get that $k \in \M(q)$ and $v_k(B_k \cup \{ g \}) - v_k(B_k)=1$. That is, condition $(ii)$ of Definition \ref{def:extend} holds as well. Hence, the returned permutation and agent witness the extendability of the input.   

Next, we complete the proof by establishing that if $(\A,p)$ is extendable with good $g$, then $\textsc{Extend}$ necessarily returns a permutation along with an agent. Since $(\A, p)$ is extendable, there exist a permutation $\sigma$ and an agent $\kappa$ that satisfy the two conditions in Definition \ref{def:extend}. Let $ \widehat{\ell} \coloneqq \sigma(\kappa)$ and note that condition $(ii)$ in Definition \ref{def:extend} implies that $\widehat{\ell} \in \M(p)$ along with $v_\kappa(A_{\widehat{\ell}} \cup \{g\}) - v_\kappa(A_{\widehat{\ell}})=1$. Hence, $k = \kappa$ and $\ell = \widehat{\ell}$ is a feasible choice for \textsc{Extend} in Line \ref{line:select-kl}. For this choice, the if-condition in Line \ref{line:return-ex} will hold: since with $k = \kappa$ and $\ell = \widehat{\ell} = \sigma(\kappa)$, the restriction of $\sigma$ from $[n] \setminus \{ k \}$ to $[n] \setminus \{ \ell \}$ is a matching in the bipartite graph $H$ (see Line \ref{line:def-bipartite}). Hence, the computed permutation $\rho$ satisfies $\sum_{i=1}^n  A_{\rho(i)} \geq \sum_{i=1}^n A_{\sigma(i)}$. Now, given that $\A_\sigma$ is an envy-freeable allocation (condition $(i)$ of Definition \ref{def:extend}), Theorem \ref{thm:hs_ef} gives us $\sum_{i=1}^n  v_i(A_{\sigma(i)}) \geq \sum_{i=1}^n v_i(A_i)$. Hence, the if-condition in Line \ref{line:return-ex} will execute successfully returning a (correct) permutation and agent for the input $(\A,p)$. This completes the proof.   
\end{proof}

\subsection{Proof of Lemma \ref{lem:s}}
\label{appendix:prop-non-ext}

This section restates and proves Lemma \ref{lem:s}. 

\LemmaFindSinkInduct*
\begin{proof}
For a non-extendable input $(\A,p)$ and $g$, write $s^0$ to denote the first agent considered by \textsc{FindSink} (Line \ref{line:find-sink-initialize}). Also, write $\mathcal{X}^0$ to denote the first allocation consider in the subroutine, $\mathcal{X}^0= (A_1, \ldots, A_{s^0} \cup \{g\}, \ldots, A_n)$. Furthermore, let $s^\tau$ and $\mathcal{X}^\tau$, respectively, denote the agent and allocation considered in the $\tau$th iteration of \textsc{FindSink}; in particular, $\mathcal{X}^\tau = (A_1, \ldots, A_{s^\tau} \cup \{g\}, \ldots, A_n)$. We will show, via induction, that, $s^\tau \in \M(p)$ and $\mathcal{X}^\tau$ is envy-freeable, for all $\tau \geq 0$. \\

\noindent
\emph{Base Case:} First, we note that, by construction, $s^0 \in \M(p)$ (Line \ref{line:find-sink-initialize}). Next, we show that $\mathcal{X}^0$ is envy-freeable. 
Assume towards a contradiction, that $\mathcal{X}^0$ is not envy-freeable. In such a case, Theorem \ref{thm:hs_ef} implies that there exists a permutation $\sigma$ such that the social welfare of allocation $\mathcal{X}^0_\sigma$ is strictly greater than that of $\mathcal{X}^0$. For the permutation $\sigma$, write allocation $\mathcal{B} = (B_1, \ldots, B_n) \coloneqq \A_\sigma$. Note that for agent $k \in [n]$, with the property that $\sigma(k) = s^0$, we have $B_k = A_{s^0}$; in particular, $\mathcal{X}^0_\sigma = (B_1, \ldots, B_k \cup \{g\}, \ldots, B_n)$. Also, note that the social welfare of $\mathcal{X}^0$ is at least the social welfare of $\mathcal{A}$; the agents' valuations are monotonic. These bounds between the social welfares of $\mathcal{X}^0_\sigma$, $\mathcal{X}^0$, and $\A$ imply\footnote{Recall that the valuations are dichotomous and, hence, integer valued.} 
\begin{align}
\sum_{i \neq k} v_i(B_i) + v_k(B_k \cup \{g \}) \geq \sum_{i=1}^n v_i(A_i) + 1  \label{ineq:SW-BA}
\end{align} 
In addition, since $\A$ is envy-freeable, Theorem \ref{thm:hs_ef} gives us  
\begin{align}
\sum_{i=1}^n v_i(A_i) \geq \sum_{i =1}^n v_i(B_i) \label{ineq:SW-AB}
\end{align}    
Here, if either inequality (\ref{ineq:SW-BA}) or (\ref{ineq:SW-AB}) is strict, then that would contradict the fact that valuation $v_k$ is dichotomous; specifically, that would contradict the bound $v_k(B_k \cup \{ g \}) - v_k(B_k) \leq 1$. Hence, both inequalities (\ref{ineq:SW-BA}) and (\ref{ineq:SW-AB}) hold with an equality. In particular, since equation (\ref{ineq:SW-AB}) is tight, we get that allocation $\mathcal{B}$ is envy-freeable;  analogous to $\A$ it maximizes social welfare among all reassignments.  Furthermore, combining equations (\ref{ineq:SW-BA}) and (\ref{ineq:SW-AB}), we obtain $v_k(B_k \cup \{ g\}) - v_k(B_k) = 1$. These observations, however, contradict the lemma assumption that $(\A,p)$ is not extendable: (i) $\mathcal{B} = \A_\sigma$ is  envy-freeable and, hence, $(\mathcal{B}, q)$ is an envy-free solution with $q \coloneqq p_\sigma$ (Lemma \ref{lem:reshuffle}), and (ii) agent $k$ satisfies $v_k(B_k \cup \{ g\}) - v_k(B_k) = 1$ and $k \in \M(q)$; here, we use the facts that $s^0 \in \M(p)$ and $q_k = p_{s^0}$ (since $\sigma(k) = s^0$). That is, permutation $\sigma$ and agent $k$ satisfy the extendability criteria (Definition \ref{def:extend}). Therefore, by way of contradiction, we get that $\mathcal{X}^0$ is an envy-freeable allocation. The envy-freeability of $\mathcal{X}^0$ along with the containment $s^0 \in \M(p)$ gives us the base case. \\

 \noindent
\emph{Induction Step:} Assuming that agent $s^{\tau-1} \in \M(p)$ and allocation $\mathcal{X}^{\tau-1}$ is envy-freeable, we establish the induction step for iteration count $\tau \geq 1$. Specifically, we will first prove that agent $s^\tau \in \M(p)$. 

The selection criterion in the while-loop of \textsc{FindSink} (Line \ref{loop_FS}) implies that the, under the subsidy vector computed for the envy-freeable allocation $\mathcal{X}^{\tau-1}$, agent $s^\tau$ must have required a subsidy of at least $2$. That is, in the envy graph $G_{\mathcal{X}^{\tau-1}}$ and starting at $s^\tau$, there exists a path with weight at least $2$ (Theorem~\ref{thm:hs_sub}). Now, assume towards a contradiction, that $s^\tau \notin \M(p)$. Since subsidy vector $p \in \{0,1\}^n$, it must be the case that $p_{s^\tau} =0$. Applying Theorem \ref{thm:hs_sub} again, we get that, in the envy-graph $G_\A$, the maximum-weight path starting at $s^\tau$ is of weight $0$. This, however, contradicts Proposition~\ref{lem:weight_increase}: the weight of any path in the envy-graph $G_{\mathcal{X}^{\tau-1}}$ can be at most one more than the weight of  the same path in $G_\mathcal{A}$. Hence, we must have $s^\tau \in \M(p)$. 

With the containment $s^\tau \in \M(p)$ in hand and using arguments analogous to the ones used in the base case (for $\mathcal{X}^0$), one can show that the allocation $\mathcal{X}^\tau$ is envy-freeable. This, overall, completes the induction step and establishes the lemma. 
\end{proof}

\section{$\EFone$ Non-Example}
\label{appendix:efone}
Here, we provide a fair division instance for which a sample run of $\textsc{Alg}$ (Algorithm~\ref{Alg:BinSubsidy}) returns an allocation that is not $\EFone$. Consider an instance five goods, $\{g_1, g_2, g_3, g_4, g_5\}$, and three agents. For all subsets $S$ of goods, we define the agents' valuations as follows:
\begin{align*}
	v_1(S) & \coloneqq \min \big\{|S\cap \{g_1,g_4\}|, \ 1 \big\} \\
	v_2(S) & \coloneqq \left| S \cap \{g_1,g_3\} \right| + \min \big\{|S\cap\{g_2,g_4,g_5\}|, \ 1 \big\} \\
	v_3(S)& \coloneqq  |S \cap \{g_1\}| + \min \big\{|S\cap\{g_3,g_4,g_5\}|, \ 1 \big\} 
\end{align*}
Note that the valuations are dichotomous. 

Let the algorithm select the goods in the order of their indices, $g_1$ to $g_5$. The algorithm starts with the empty allocation $\mathcal{A}^1 = (\emptyset, \emptyset, \emptyset)$ and subsidy vector $p^1 = (0, 0, 0)$. We next detail the five iterations of the algorithm, wherein each of the five goods are respectively assigned. \\

\noindent
\textit{Iteration 1:} The good $g_1$ is selected first. Agent $1$ is in $\M(p^1)$, and $v_1(A^1_1\cup\{g_1\})-v_1(A^1_1)=1$. The identity permutation $\sigma$ and agent $1$ satisfy the extendability criteria and, hence, $(\A^1,p^1)$ is extendable with $g_1$. The algorithm (in Step~\ref{line:update-extend}) assigns $g_1$ to agent $1$ to obtain allocation $\A^2=(A^2_1, A^2_2, A^2_3)$. At this point, agents $2$ and $3$ require a subsidy of $1$ to make the solution envy-free. The following table lists the agents' valuations for the bundles in $\A^2$ and the corresponding subsidies $p^2=(p^2_1, p^2_2, p^2_3)$. \\

\newcolumntype{L}{>{$}l<{$}}
\newcolumntype{C}{>{$}c<{$}}
\newcolumntype{R}{>{$}r<{$}}
\newcommand{\nm}[1]{\textnormal{#1}}

\begin{table} [ht!]
	\centering
	\begin{tabular}{LCCCCR}
		\toprule
		\multicolumn{1}{l}{} &
		\multicolumn{3}{c}{Valuations}    &
		\multicolumn{1}{c}{Subsidies}    \\ 
		\cmidrule(lr){2-4}
		\cmidrule(lr){5-5}
		
		&
		\multicolumn{1}{c}{$A^2_1 =\{g_1\}$} &
		\multicolumn{1}{c}{$A^2_2=\emptyset$} &
		\multicolumn{1}{c}{$A^2_3=\emptyset$} &
		\multicolumn{1}{c}{$p^2_i$} \\
		\midrule
		
		\nm{Agent 1} & 1  & 0 & 0 & 0 \\
		\nm{Agent 2} & 1  & 0 & 0 & 1 \\
		\nm{Agent 3} & 1  & 0 & 0 & 1 \\  
		\bottomrule
	\end{tabular}
	\caption{After the first iteration.} 
	\end{table}

\noindent
\textit{Iteration 2:} The second good $g_2$ is selected. For agent $2$ we have $v_2(A^2_2\cup\{g_2\})-v_2(A^2_2) = 1$ and this agent belongs to $\M(p^2)$. The identity permutation $\sigma$ and agent $2$ satisfy the extendability criteria and, hence, $(\A^2,p^2)$ is extendable with $g_2$. The algorithm (in Step~\ref{line:update-extend}) assigns $g_2$ to agent $2$ to obtain allocation $\A^3=(A^3_1, A^3_2, A^3_3)$. Agent $3$ requires a subsidy of $1$ to make the allocation $\A^3$ envy-freeable. Table \ref{table:iteration-two} lists the valuations that each agent $i$ has for the bundles after the assignment of good $g_2$ and the corresponding subsidies. \\

\begin{table} [ht!]
	\centering
	\begin{tabular}{LCCCCR} 
		\toprule
		\multicolumn{1}{l}{} &
		\multicolumn{3}{c}{Valuations}    &
		\multicolumn{1}{c}{Subsidies}    \\ 
		\cmidrule(lr){2-4}
		\cmidrule(lr){5-5}
		
		&
		\multicolumn{1}{c}{$A^3_1=\{g_1\}$} &
		\multicolumn{1}{c}{$A^3_2=\{g_2\}$} &
		\multicolumn{1}{c}{$A^3_3$=$\emptyset$} &
		\multicolumn{1}{c}{$p^3_i$} \\
		\midrule
		
		\nm{Agent 1} & 1  & 0 & 0 & 0 \\
		\nm{Agent 2} & 1  & 1 & 0 & 0 \\
		\nm{Agent 3} & 1  & 0 & 0 & 1 \\  
		\bottomrule
	\end{tabular}
	\caption{After the second iteration.} 
	\label{table:iteration-two}
\end{table}

\noindent
\textit{Iteration 3: }The good $g_3$ is subsequently selected. For agent $3$ we have $v_3(A^3_3\cup\{g_3\})-v_2(A^3_3) = 1$, and belongs to $\M(p^3)$. The identity permutation $\sigma$ and agent $3$ satisfy the extendability criteria and, hence, $(\A^3,p^3)$ is extendable with $g_3$. The algorithm (in Step~\ref{line:update-extend}) assigns $g_3$ to agent $3$ to obtain allocation $\A^4$. Table \ref{table:iteration-three} lists the valuations that each agent $i$ has for the bundles after the assignment of good $g_3$, and the corresponding subsidies $p^4_i$. \\

\begin{table} [ht!]
	\centering
	\begin{tabular}{LCCCCR}
		\toprule
		\multicolumn{1}{l}{} &
		\multicolumn{3}{c}{Valuations}    &
		\multicolumn{1}{c}{Subsidies}    \\ 
		\cmidrule(lr){2-4}
		\cmidrule(lr){5-5}
		
		&
		\multicolumn{1}{c}{$A^4_1=\{g_1\}$} &
		\multicolumn{1}{c}{$A^4_2=\{g_2\}$} &
		\multicolumn{1}{c}{$A^4_3=\{g_3\}$} &
		\multicolumn{1}{c}{$p^4_i$} \\
		\midrule
		
		\nm{Agent 1} & 1  & 0 & 0 & 0 \\
		\nm{Agent 2} & 1  & 1 & 1 & 0 \\
		\nm{Agent 3} & 1  & 0 & 1 & 0 \\  
		\bottomrule
	\end{tabular}
	\caption{After the third iteration.}
	\label{table:iteration-three}
\end{table}

\noindent
\textit{Iteration 4: }The good $g_4$ is selected next. Note that for each agent the marginal value of $g_4$, with respect to her own bundle is zero, i.e., $v_i(A^4_i\cup\{g_4\})-v_i(A^4_i)=0$ for all $i\in \{1,2,3\}$.  Also, the social welfare of any allocation obtained by reassigning the bundles $A^4_1, A^4_2, A^4_3$ is strictly less than the social welfare of $\A^4$. Therefore, $(\A^4, p^4)$ is non-extendable with $g_4$. Since $p^4=(0,0,0)$, all the agents are in $\M(p^4)$. Furthermore, adding the good $g_4$ to agent $1$'s bundle does not increase any agent's subsidy beyond one. Hence, agent $1$ is a feasible candidate to be returned by the $\textsc{FindSink}$ subroutine. The algorithm (in Step~\ref{line:sink-assign}) assigns the good $g_4$ to agent $1$ to obtain the allocation $\A^5=(A^5_1, A^5_2, A^5_3)$. Now, agents $2$ and $3$ require a subsidy of one towards envy-freeness. Table \ref{table:iteration-four} lists the valuations that each agent $i$ has for the bundles after the assignment of good $g_4$ and the subsidies. 

\begin{table} [h!]
	\centering
	\begin{tabular}{LCCCCR}
		\toprule
		\multicolumn{1}{l}{} &
		\multicolumn{3}{c}{Valuations}    &
		\multicolumn{1}{c}{Subsidies}    \\ 
		\cmidrule(lr){2-4}
		\cmidrule(lr){5-5}
		
		&
		\multicolumn{1}{c}{$A^5_1=\{g_1,g_4\}$} &
		\multicolumn{1}{c}{$A^5_2=\{g_2\}$} &
		\multicolumn{1}{c}{$A^5_3=\{g_3\}$} &
		\multicolumn{1}{c}{$p^5_i$} \\
		\midrule
		
		\nm{Agent 1} & 1  & 0 & 0 & 0 \\
		\nm{Agent 2} & 2  & 1 & 1 & 1 \\
		\nm{Agent 3} & 2  & 0 & 1 & 1 \\  
		\bottomrule
	\end{tabular}
	\caption{After the assignment of good $g_4$.}
\label{table:iteration-four}
\end{table}

\noindent
\textit{Iteration 5:} Finally, the good $g_5$ is selected. Note that for each agent $i$ the marginal value of $g_5$ with respect to her current bundle is zero, $v_i(A^5_i \cup \{g_5 \}) - v_i(A^5_i) = 0$.  However, for agent $2$, the marginal of the good $g_5$ with respect to the bundle $A^5_3$ is one: $v_2(A^5_3\cup\{g_5\})-v_2(A^5_3)=1$. Consider the permutation $\sigma$ where $\sigma(1)=2, \sigma(2)=3$, and $\sigma(3)=1$. Note that the allocation $\mathcal{B} \coloneqq A^5_\sigma$ is envy-freeable with subsidies $q = p_\sigma$. Therefore, the extendability criteria is satisfied by $\sigma$ and agent $2$. Table \ref{table:allocB} lists the valuation that each agent $i$ has for the bundles in $\mathcal{B} = (B_1, B_2, B_3)$ and the corresponding subsidies $q_i$s. 

\begin{table} [ht!]
	\centering
	\begin{tabular}{LCCCCR}
		\toprule
		\multicolumn{1}{l}{} &
		\multicolumn{3}{c}{Valuations}    &
		\multicolumn{1}{c}{Subsidies}    \\ 
		\cmidrule(lr){2-4}
		\cmidrule(lr){5-5}
		
		&
		\multicolumn{1}{c}{$B_1=\{g_2\}$} &
		\multicolumn{1}{c}{$B_2=\{g_3\}$} &
		\multicolumn{1}{c}{$B_3=\{g_1,g_4\}$} &
		\multicolumn{1}{c}{$q_i$} \\
		\midrule
		
		\nm{Agent 1} & 0  & 0 & 1 & 1 \\
		\nm{Agent 2} & 1  & 1 & 2 & 1 \\
		\nm{Agent 3} & 0  & 1 & 2 & 0 \\  
		\bottomrule
	\end{tabular}
	\caption{Envy-freeable allocation $\mathcal{B} \coloneqq \A^5_\sigma$ and subsidy vector $q \coloneqq p_\sigma$.}
\label{table:allocB}
\end{table}

The algorithm (in Step~\ref{line:update-extend}) assigns the good $g_5$ to the bundle $B_2$ and obtains the final allocation $\A^6$. Table \ref{table:iteration-six} lists the agents' valuations for the bundles in $\A^6=(A^6_1, A^6_2, A^6_3)$ and the subsidies $p^6_i$s. \\

\begin{table} [h!]
	\centering
	\begin{tabular}{LCCCCR}
		\toprule
		\multicolumn{1}{l}{} &
		\multicolumn{3}{c}{Valuations}    &
		\multicolumn{1}{c}{Subsidies}    \\ 
		\cmidrule(lr){2-4}
		\cmidrule(lr){5-5}
		
		&
		\multicolumn{1}{c}{$A^6_1=\{g_2\}$} &
		\multicolumn{1}{c}{$A^6_2=\{g_3,g_5\}$} &
		\multicolumn{1}{c}{$A^6_3=\{g_1,g_4\}$} &
		\multicolumn{1}{c}{$p^6_i$} \\
		\midrule
		
		\nm{Agent 1} & 0  & 0 & 1 & 1 \\
		\nm{Agent 2} & 1  & 2 & 2 & 0 \\
		\nm{Agent 3} & 0  & 1 & 2 & 0 \\  
		\bottomrule
	\end{tabular}
\caption{Returned solution.}
\label{table:iteration-six}
\end{table}

Overall, \textsc{Alg} returns the envy-free solution $(\A^6,p^6)$. 

Notably, agent $1$ envies agent $3$, even after the removal of any good: $v_1(A^6_1) < v_1(A^6_3 \setminus\{g\})$ for all $g\in A^6_3$. Therefore, the returned allocation $\A^6$ is not $\EFone$. \\

\noindent
\emph{Remark:} The dichotomous valuations in the instance at hand are, in fact, binary submodular. Therefore, applying the algorithm of Goko et al.~\cite{goko2021fair}, one would find here an allocation that is both envy-freeable (with $0$/$1$ subsidies) and $\EFone$. This observation highlights that our algorithm executes differently from that of Goko et al.~\cite{goko2021fair}.

\end{document}